\documentclass{article}
\usepackage{iclr2027_conference,times}
\iclrfinalcopy
\usepackage{etoolbox}
\makeatletter
\patchcmd{\@maketitle}{Published as a conference paper at ICLR 2027}{Preprint}{}{\PackageError{arxiv-preprint}{Could not replace the conference header}{Check the title formatting.}}
\makeatother
\usepackage[T1]{fontenc}
\usepackage[utf8]{inputenc}
\usepackage{amsmath,amssymb,amsthm,mathtools,bm}
\usepackage{graphicx,booktabs,tabularx,multirow,xcolor}
\usepackage{microtype}
\usepackage[hidelinks]{hyperref}
\usepackage{url}
\graphicspath{{figures/}}
\newtheorem{theorem}{Theorem}
\newtheorem{proposition}[theorem]{Proposition}
\newtheorem{lemma}[theorem]{Lemma}

\theoremstyle{definition}

\newcommand{\E}{\mathbb E}
\newcommand{\Prob}{\mathbb P}
\newcommand{\R}{\mathbb R}
\newcommand{\ind}{\mathbf 1}
\newcommand{\ones}{\mathbf 1}

\newcommand{\Poi}{\operatorname{Pois}}

\hypersetup{pdfauthor={Rudra Chopra},pdfsubject={Research preprint}}

\AtBeginDocument{\lhead{Preprint}}

\title{Calibration Count Reuse:\\Validity Does Not Determine Efficiency}
\author{Rudra Chopra}
\hypersetup{pdftitle={Calibration Count Reuse: Validity Does Not Determine Efficiency}}
\begin{document}
\maketitle
\begin{abstract}
Calibration count reuse raises separate validity and efficiency questions. We give a validity criterion for general count-dependent nonconformity scores: transferring one count from another class to the scored class must not improve its conformity. A leave-self-out full conformal reference proves the criterion without requiring normalization or preservation of same-class score order. For a common separable transformation, universal exchangeable validity is equivalent to being nondecreasing in the count, provided $K\alpha\geq1$; normalized multiplicative weights obey the complementary nonincreasing condition. Additive penalties are covered under the stated information restrictions. Efficiency has no parallel ordering: two iid constructions make the same valid rule improve or worsen expected size at unchanged coverage. An expanded 55-rule study finds no resolved advantage from selected live-count rules over uniform weights. Image studies identify undercoverage under iid resampling, including at numerical convergence. Separately, execution of the released Conf-OT pipeline on its DTD and Aircraft benchmark subsets produces near-nominal median coverage under fixed stratified counts. The native results are reported separately from the iid analyses, without treating a benchmark observation as a universal guarantee. The findings separate validity, classifier confidence, numerical convergence, and population-specific efficiency.
\end{abstract}
\section{Introduction}
Class frequency adjustment raises two different questions for conformal prediction. Can final calibration labels safely influence a score that is calibrated on those same labels? If they can, does this improve prediction set efficiency? A validity theorem does not answer the second question, and an average set-size trend does not establish a universal optimum.

A calibration observation contributes to its own label count, whereas a query does not. We characterize a direction of score change under which this asymmetry is conservative. The key operation transfers one count from another class to the class being scored. If that operation cannot make the score more conforming, comparison with a leave-self-out full conformal reference proves validity. This argument covers both normalized multiplicative weights and additive penalties, without requiring a shared normalizer or preservation of same-class score order.

For a common separable transformation $S_i(h;c)=\phi(V_i(h),c_h)$, universal exchangeable validity is equivalent to being nondecreasing in the count, provided $K\alpha\geq1$. No monotonicity in the feature mark $V_i(h)$ is needed. For normalized probabilities, the corresponding exact boundary is nonincreasing multiplicative weights. The complete theorem and both converses are in Section~\ref{sec:boundary}. These are universal statements over permitted bases, laws, and sample sizes, not characterizations for every fixed model or iid population.

Efficiency has no such ordering. Theorem~\ref{thm:efficiency} gives two iid laws where the same valid count rule respectively removes and adds false labels at unchanged coverage. Our 55-rule study consequently compares live-count rules with equally informed frozen-weight controls. Empirical prior transport, motivated by Conf\mbox{-}OT \citep{silva2025}, supplies a contrasting invalid use. A worked rank flip precedes the theory, followed by exact failure constructions, the frozen image study, and separately reported retrospective analyses.

\section{A two class rank flip}\label{sec:toy}
Consider nine calibration observations: five of class $A$ and four of class $B$. The target is a class $B$ observation. Every base score predicts the correct class. The positive class weights are $f(c)=c+1$, giving calibration weights $(6,5)$. At nominal $90\%$, the rank is $k=\lceil10\cdot.9\rceil=9$: a candidate is rejected if all nine calibration true probabilities are strictly higher.

\begin{figure}[ht]
\centering
\setlength{\tabcolsep}{5pt}
\begin{tabular}{lccc}
\toprule
Rows and true class & Base $(A,B)$ & Counts $(5,4)$ & Add candidate $B$\\
 & & Weights $(6,5)$ & Weights $(6,6)$\\
\midrule
Five calibration $A$ rows & $(5/2,1)$ & $3/4$ & $5/7$\\
Four calibration $B$ rows & $(1,4)$ & $10/13$ & $4/5$\\
Target $B$ row & $(1,3)$ & $5/7$ & $3/4$\\
\midrule
Calibration scores above target & & $9$ & $4$\\
Decision for candidate $B$ & & \textbf{Exclude} & \textbf{Include}\\
\bottomrule
\end{tabular}
\caption{\textbf{The mechanism, in exact fractions.} Displayed probabilities are for each row's true label. The features, labels, and base scores never change; only the candidate's count is added. Augmentation moves the target above all five class $A$ scores while preserving its order relative to class $B$ scores. This is a samplewise rank flip, not a claim of zero marginal coverage for this two class law. In particular, it does not invoke the later $K\alpha\geq1$ converse.}
\label{fig:mechanism}
\end{figure}

The ordinary procedure excludes the true label even though the target's unadjusted probability is $3/4$. Full augmentation recomputes \emph{both} calibration scores and the candidate score using $(6,6)$ and includes it. Changing only the target probability would be a different, unjustified procedure. The example isolates the comparison that fails; the theorem below determines when the same direction holds for every possible sample.

\section{Setup and a validity criterion beyond multiplicative scores}\label{sec:setup}
\paragraph{Permitted information.} There are $n\geq1$ calibration pairs, $M\geq1$ query rows, and a fixed catalog $[K]$. Conditional on independently fitted information and the other query features, calibration plus the chosen target must remain exchangeable. Feature marks $V_i(h)$ may use the pooled features but not final calibration labels; they permute with the rows. Fixed measurable score functions $\Psi_h$ produce finite nonconformity scores
\[
S_i(h;c)=\Psi_h(V_i(h),c),\qquad c_h=\sum_{i=1}^n\ind\{Y_i=h\}.
\]
Smaller scores are more conforming. Hyperparameters may be fitted independently, not chosen from final calibration outcomes. One important special case is $S_i(h;c)=-p_i(h;c)$ with positive feature-only $A$ and fixed positive functions $f_h$:
\begin{equation}\label{eq:generalweight}
p_i(h;c)=\frac{A_{ih}f_h(c_h)}{\sum_j A_{ij}f_j(c_j)}.
\end{equation}

\paragraph{Inclusive ranks and two references.} The ordinary set $C_E$ uses
\begin{equation}\label{eq:rank}
k_\alpha=\lceil(n+1)(1-\alpha)\rceil,\quad
b_h=\#\{i\leq n:S_i(Y_i;c)<S_*(h;c)\},\quad h\in C_E\ \Longleftrightarrow\ b_h<k_\alpha.
\end{equation}
Exact ties remain included; $k_\alpha=n+1$ includes every label. The all-count reference $C_A$ replaces $c$ by $c+e_h$ in \emph{every} score for candidate $h$. Our broader argument uses a different reference $C_{\rm loo}$: score calibration row $i$ using $c+e_h-e_{Y_i}$ and the query using $c$. At the true hypothesis, every row is scored using the full label count minus its own label. Both reference score vectors are therefore equivariant, and classical full conformal ranking \citep{shafer2008} gives
\begin{equation}\label{eq:reference}
\Pr\{Y_*\in C_A\},\ \Pr\{Y_*\in C_{\rm loo}\}\ \geq\ k_\alpha/(n+1)\ \geq\ 1-\alpha.
\end{equation}
Neither reference fits a new feature representation. $C_A$ and $C_{\rm loo}$ need not be the same set.

\subsection{A transfer criterion and two sharp boundaries}\label{sec:boundary}
\begin{theorem}[Count-transfer validity and sharp specializations]\label{thm:sharpcommon}
Under the information and sampling conditions above, suppose that for every permitted mark $v$, count vector $c$, and $j\ne h$ with $c_h\geq1$,
\begin{equation}\label{eq:transfer}
\Psi_j(v,c+e_j-e_h)\geq\Psi_j(v,c).
\end{equation}
Then $C_{\rm loo}\subseteq C_E$ for every sample and $C_E$ has marginal coverage at least $1-\alpha$. Reversing all inequalities reverses containment, but does not itself imply nominal undercoverage.

For a common separable map $\Psi_h(v,c)=\phi(v,c_h)$ fixed across classes and sample sizes, fix $K\alpha\geq1$. Universal validity over all sample sizes, admissible exchangeable laws, and permitted feature marks holds if and only if $\phi(v,m+1)\geq\phi(v,m)$ for every $v,m$. Its sufficient direction permits different maps $\phi_h$ and imposes no monotonicity in $v$.

For the normalized family \eqref{eq:generalweight}, the corresponding common-function characterization is that $f$ must be nonincreasing. Its sufficient direction permits different nonincreasing $f_h$. Every prohibited step in either common-function characterization admits an exchangeable zero-coverage law at a specified sample size.
\end{theorem}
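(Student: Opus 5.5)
The containment $C_{\rm loo}\subseteq C_E$ should follow from a samplewise comparison of calibration scores at a fixed candidate label. Coverage then follows from \eqref{eq:reference}, and both converses come from one symmetric construction with constant marks.

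\textbf{Containment and coverage.} Fix a sample and a candidate $h$. In both $C_E$ and $C_{\rm loo}$ the query is scored with the same count $c$, so $S_*(h;c)$ is common to the two sets. Take a calibration row $i$ with $Y_i=j$.
\begin{itemize}
\item If $j=h$, then $c+e_h-e_{Y_i}=c$, so the two score vectors agree on row $i$.
\item If $j\neq h$, set $c'=c+e_h-e_j$. Then $c'_h=c_h+1\geq1$ and $c'+e_j-e_h=c$, so \eqref{eq:transfer} applied at $c'$ gives $\Psi_j(V_i(j),c)\geq\Psi_j(V_i(j),c')$.
\end{itemize}
Hence every ordinary calibration score is at least its leave-self-out counterpart. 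Since $b_h$ counts calibration scores strictly below the common query score, $b_h^{E}\leq b_h^{\rm loo}$. Therefore $h\in C_{\rm loo}$, which means $b_h^{\rm loo}<k_\alpha$, implies $h\in C_E$. Applying this at $h=Y_*$ and invoking \eqref{eq:reference} gives coverage at least $1-\alpha$. Reversing \eqref{eq:transfer} reverses every inequality above, so $C_E\subseteq C_{\rm loo}$. That containment yields no lower or upper coverage statement by itself, because \eqref{eq:reference} is only a lower bound.

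\textbf{Sufficiency in the two families.}
\begin{itemize}
\item \emph{Separable case.} For $\Psi_h(v,c)=\phi_h(v,c_h)$, the left side of \eqref{eq:transfer} depends only on coordinate $j$, which moves from $c_j$ to $c_j+1$. So \eqref{eq:transfer} reduces to $\phi_j(v,c_j+1)\geq\phi_j(v,c_j)$. This requires nothing in $v$ and allows class-specific maps.
\item \emph{Normalized case.} Here $\Psi_j=-p(j;\cdot)$, so I must show that $p_i(j;c+e_j-e_h)\leq p_i(j;c)$. Write $p_i(j)=A_{ij}f_j/(A_{ij}f_j+A_{ih}f_h+R)$ with $R$ unchanged by the transfer. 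This expression is increasing in $f_j$ and decreasing in $f_h$. Under nonincreasing $f_j$ and $f_h$, the transfer moves $f_j(c_j)\to f_j(c_j+1)$ down and $f_h(c_h)\to f_h(c_h-1)$ up. Both changes lower $p_i(j)$, as needed. Differing $f_h$ cause no problem here.
\end{itemize}

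\textbf{Converses (zero-coverage laws).} Suppose $\phi(v_0,m+1)<\phi(v_0,m)$ for some $v_0,m$; the normalized analogue is $f(m+1)>f(m)$.
\begin{itemize}
\item \emph{The law.} Take all marks equal to $v_0$ (and $A\equiv1$), which are permitted constant marks. Set $n+1=K(m+1)$ and let the law be a uniformly random permutation of a multiset containing exactly $m+1$ labels from each class. This law is exchangeable.
\item \emph{The scores.} Whatever the target class $h$ is, the calibration counts are $c_h=m$ and $c_j=m+1$ for every $j\neq h$. The query's true label is scored with $\phi(v_0,m)$. The $m$ calibration rows of class $h$ tie with it and so are not counted. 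All $(K-1)(m+1)$ other calibration rows score strictly lower, at $\phi(v_0,m+1)$. In the normalized case each other class has probability $f(m+1)/(f(m)+(K-1)f(m+1))$, strictly larger than the query's.
\item \emph{The count.} Thus $b_{Y_*}=n+1-(m+1)$. Exclusion requires $b_{Y_*}\geq k_\alpha=\lceil(n+1)(1-\alpha)\rceil$, which is equivalent to $m+1\leq\lfloor(n+1)\alpha\rfloor$ because $m+1$ is an integer. This holds since $(n+1)\alpha=K(m+1)\alpha\geq m+1$ exactly when $K\alpha\geq1$. So the true label is excluded on every sample, giving coverage zero at the specified $n$.
\end{itemize}
Combined with the sufficiency directions, this gives both characterizations.

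\textbf{Main obstacle.} The delicate step is the converse bookkeeping. Strict versus inclusive inequalities, the treatment of ties with same-class rows, and the ceiling in $k_\alpha$ must line up so that $K\alpha\geq1$ is exactly what is needed. I would verify the edge case $m=0$ (the target's class is then empty in calibration) and the case where $k_\alpha$ equals $n+1$. The latter cannot occur under $K\alpha\geq1$ and $K\geq2$. For $K=1$, $K\alpha\geq1$ forces $\alpha\geq1$, so $k_\alpha\leq0$ and the inequality $b\geq k_\alpha$ is trivial.
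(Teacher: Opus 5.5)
Your proposal is correct and follows the paper's proof essentially step for step. It uses the same leave-self-out comparison, applying \eqref{eq:transfer} at $c+e_h-e_j$, and the same balanced exchangeable witness: your $m+1$ labels per class with $n=K(m+1)-1$ are the paper's $m$ per class with $n=Km-1$. The rank count likewise reduces exclusion to $K\alpha\geq1$. The only real difference is your normalized witness, which uses the constant base $A\equiv1$ instead of the paper's $A_{ij}=R^{\ind\{Y_i=j\}}$; yours suffices for the stated zero-coverage claim, and the paper's choice additionally makes the witness a perfect classifier that can be made arbitrarily confident.
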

\begin{proof}
Fix candidate $h$ and put $D=c+e_h$. In its leave-self-out reference, calibration row $i$ with label $j$ has score $R_i=\Psi_j(V_i(j),D-e_j)$; the query has $R_*=\Psi_h(V_*(h),c)$, exactly its ordinary score. If $j=h$, the ordinary and reference calibration scores coincide. If $j\ne h$, apply \eqref{eq:transfer} at $D-e_j$ to obtain
\[
S_i(j;c)=\Psi_j(V_i(j),D-e_h)\geq\Psi_j(V_i(j),D-e_j)=R_i.
\]
Thus the ordinary number of calibration scores strictly below the query is no larger than the reference number. This proves containment with arbitrary ties, and \eqref{eq:reference} gives coverage. The reverse direction is identical with reversed inequalities.

A separable map nondecreasing in its own count satisfies \eqref{eq:transfer}. Conversely, suppose $\phi(v_0,m)<\phi(v_0,m-1)$ for some $m\geq1$. Uniformly permute $m$ observations from each class and hold one out, so $n=Km-1$. Let every true-label feature mark equal $v_0$, a permitted feature-only constant base. The target is scored at $m-1$; all $(K-1)m$ other-class calibration rows are scored at $m$ and are strictly more conforming. Since $\lceil Km(1-\alpha)\rceil\leq(K-1)m$, every target is excluded.

For normalized probabilities, adding to the scored class count lowers its nonincreasing weight, while removing a count from another class raises that competing weight. Its probability cannot increase, so $S=-p$ satisfies \eqref{eq:transfer}. For necessity suppose $f(m)>f(m-1)$ and use the same balanced labels with the feature-only base $A_{ij}=R^{\ind\{Y_i=j\}}$, $R>1$. With $q=f(m-1)/f(m)<1$, the target true probability is $Rq/(Rq+K-1)$ and every other-class true probability is $R/(R+K-2+q)$. The latter is larger because
\[
(Rq+K-1)-q(R+K-2+q)=(1-q)(K-1+q)>0.
\]
The same count inequality excludes every target. This witness has perfect native classification. Taking $R$ large and then adding sufficiently small continuous feature noise preserves strict comparisons and achieves any prescribed confidence below one; Appendix~A gives the perturbation detail.
\end{proof}

\paragraph{Quantifiers.} The converses fix the common map, $K$, and $\alpha$, then range over sample sizes, laws, and permitted bases. They are not iff statements at each fixed $n$, iid characterizations, or heterogeneous-map necessity results. A sufficient map can depend on known $n$ or independent fitting information; the common-map converses do not automatically extend to arbitrary sample-size-dependent families. The $K\alpha\geq1$ restriction belongs only to the converses.

\subsection{Additive scores and why the reference matters}\label{sec:additive}
A separable additive score $S_i(h;c)=s_i(h)+g_h(c_h)$ is valid whenever $g_h$ is nondecreasing. More generally, for fixed $\lambda\geq0$, $a\geq0$, and feature-only penalties $w_i(h)\geq0$,
\begin{equation}\label{eq:additive}
S_i(h;c)=s_i(h)+\lambda w_i(h)\frac{c_h+a}{n+Ka}
\end{equation}
is covered, including $w_i(h)=(\operatorname{rank}_i(h)-k_r)_+$. This is the additive rank-penalty form of soft TACP \citep{liu2026tail}, with an explicit sufficient condition for using final calibration priors and independently fixed nuisance parameters. It is not a claim about every implementation or its conditional-coverage guarantees.

No movement of other-label scores is needed. Row-specific penalties can change same-class order, so the original $C_A$ argument cannot simply be transplanted. Even a map monotone in both arguments can break $C_A\subseteq C_E$ by merging ties, while remaining valid through $C_{\rm loo}$ (Appendix~\ref{app:general}). Also, normalized probabilities depend on all counts and are not generally expressible as $\phi(s_i(h),c_h)$. The transfer criterion, rather than that scalar formula, is what covers both families.

\section{Efficiency has no universal ordering}\label{sec:efficiency}
Containment $C_A\subseteq C_E$ for a decreasing function compares \emph{that function} with its own augmented reference. It does not compare either set with the set $C_0$ from uniform weights. The following result makes the distinction exact.

\begin{theorem}[The same valid count rule can help or hurt]\label{thm:efficiency}
Take $K=2$, $n=9$, $\alpha=1/10$, and the common fixed function
\[
 f(c)=\frac{1}{\max(c+1,5)}.
\]
There exist iid laws and positive feature-only bases for which ordinary count reuse respectively satisfies $C_f\subseteq C_0$ or $C_0\subseteq C_f$ pointwise, with strict expected-size inequalities. In both constructions, $\ind\{Y_*\in C_f\}=\ind\{Y_*\in C_0\}$ on every sample. The rule is nonconstant and nonincreasing, so it has the universal marginal guarantee in Theorem~\ref{thm:sharpcommon}.
\end{theorem}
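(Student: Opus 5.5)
The plan is to exploit how coarse the setting is. With $K=2$ and $n=9$ we have $c_A+c_B=9$, so exactly one class has count at least $5$ and the other at most $4$. Hence $f$ assigns the minority class weight $1/5$ and the majority class weight $1/(c_{\max}+1)\le 1/6$. Relative to uniform weights, the rule therefore always boosts the minority class by the factor $(c_{\max}+1)/5\ge 6/5$. Also $k_\alpha=\lceil 10\cdot 0.9\rceil=9=n$, so by \eqref{eq:rank} a candidate $h$ is excluded iff its probability is strictly below all nine calibration true-label probabilities. Thus $h\in C$ iff $p_*(h)\ge \min_i p_i(Y_i)$, for both $C_f$ and $C_0$. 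The validity clause is immediate: $f$ is nonincreasing and nonconstant ($f(4)=1/5>f(5)=1/6$), so Theorem~\ref{thm:sharpcommon} applies. The work is the two constructions.

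I will use iid laws with finitely many feature types and positive feature-only bases $A_{ij}$ that are deterministic functions of the feature.
\begin{itemize}
\item \textbf{Clean types.} For a clean type the base is $(R,1)$ or $(1,R)$ with $R$ large, aligned with the label.
\item \textbf{Ambiguous type.} One query-relevant type $x_0$ has a base such as $(1,1)$ or $(\rho,1)$, and its label is deterministic.
\end{itemize}
Pointwise equality of the coverage indicators comes from two facts. First, a query whose feature is clean has the same base as calibration rows of its class and is scored with the same count vector $c$. It therefore ties the same-class calibration scores (ties are included), or beats the other class by a margin of order $R$. If its class is absent from calibration, the count-zero class is the minority and gets boosted, so its true label is again included. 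Under uniform weights this inclusion is trivial. Second, at $x_0$ I choose the label and base so that the true label's probability clears the threshold $\min_i p_i(Y_i)$ under both weightings with an explicit margin. Both checks are finite inequalities in exact fractions over the possible count vectors, and I will verify them directly.

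The two constructions then differ only in which class is the false label at $x_0$ and which class is the likely majority.
\begin{itemize}
\item \textbf{Help construction} ($C_f\subseteq C_0$). Give class $A$ high probability so that $A$ is typically the majority, and let the false label at $x_0$ be $A$. The weight ratio $5/(c_A+1)$ pushes $p_*(A)$ down. I tune $\rho$ so that $p_*(A)$ sits just above the threshold under uniform weights and just below it under $f$ on the event $c_A\ge 5$. This removes the false label on a positive-probability event and never adds one.
\item \textbf{Hurt construction} ($C_0\subseteq C_f$). Mirror the first: the false label at $x_0$ is the likely minority class. The boost raises its probability from just below the threshold to just above it, so the rule adds a false label on a positive-probability event.
\end{itemize}
In each case, pointwise containment and unchanged true-label indicators together give the strict expected-size inequality, since coverage is identical sample by sample.

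The main obstacle is that the threshold $\min_i p_i(Y_i)$ is itself recomputed under $f$. Boosting the minority class also raises the minority class's calibration true probabilities, and penalizing the majority lowers the majority's. So I must check that the false-label crossing goes in the claimed direction on \emph{every} count vector, including the rare ones where the likely majority is actually the minority, and not only on the typical event. I will handle this first by making the clean calibration probabilities depend on $R$ so that $\min_i p_i(Y_i)$ is always attained by the same class. I will then place $\rho$ strictly inside the gap between the uniform threshold and the $f$-threshold, checking the finitely many cases $c_A\in\{0,\dots,9\}$ with exact fractions as in Figure~\ref{fig:mechanism}. If a single ambiguous type cannot satisfy all ten cases at once, the fallback is to make the ambiguous type's label depend on a second coordinate that is invisible to the base. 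This keeps the false label's identity fixed while leaving the base, and hence all probabilities, unchanged.
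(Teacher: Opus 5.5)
Your mechanism is the paper's: an ambiguous row type whose false label sits on the inclusion boundary and is pushed across by the ratio $w=f(c_A)/f(c_B)\in[1/2,2]$. The paper's laws are the special case with a single feature row per class: a clean row with $R=19$ and an ambiguous row $(3,2)$, i.e.\ $\rho=3/2$.

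The gap is in your argument for $\ind\{Y_*\in C_f\}=\ind\{Y_*\in C_0\}$. You claim the true label $L_0$ at $x_0$ clears $\min_i p_i(Y_i)$ with an explicit margin under both weightings, and this is false on every sample. If some calibration row has feature $x_0$, it has the same base and the same counts $c$ as the query. The query's true-label probability therefore \emph{equals} that row's, and it is included only by the inclusive-tie convention. If no calibration row has feature $x_0$ (positive probability under iid sampling, e.g.\ when class $L_0$ is absent), the threshold is a clean-row probability of at least $R/(R+2)$. Both labels at $x_0$ have probability at most $\max(\rho w,1)/(\rho w+1)$, which is bounded away from one, so both rules \emph{exclude} the true label. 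Coverage agreement survives only because both rules then return the empty set at $x_0$. You must check this, and it needs $R$ large uniformly over $w\in[1/2,2]$. These samples are exactly the source of the paper's deficits $(1-p)p^9$ and $p(1-p)^9$.

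The same tie is what disposes of the obstacle you flag, and your plan lacks it. Large $R$ does not make the minimum ``always attained by the same class''; it is attained by the $x_0$ rows when one is present and by clean rows otherwise. When an $x_0$ row is present, the threshold \emph{is} the query's own true-label probability, so its recomputation under $f$ is automatic. With base $(\rho,1)$, the false label is included iff $\rho w\geq1$ when $L_0=B$, and iff $\rho w\leq1$ when $L_0=A$. So, provided an $x_0$ calibration row exists:
\begin{itemize}
\item $L_0=B$ with $\rho\in[1,6/5)$ removes the false label exactly on $\{c_A\geq5\}$ (help);
\item $L_0=A$ with $\rho\in(1,6/5]$ adds it exactly there (hurt).
\end{itemize}
The base $(1,1)$ fails for the hurt case: the false label ties under uniform weights, and $f$ removes it when $c_A\leq4$.

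Under your mixture law the sets are not functions of $c_A$ alone, so ten cases do not suffice; you must also split on whether an $x_0$ calibration row exists. The paper's one-row-per-class laws make $C$ determine every set, which is why its ten-count table is the entire verification. Drop the fallback as well: if the label at $x_0$ varies, the crossing label is sometimes the true one, which breaks the coverage identity.
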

\begin{proof}
Let $Y_i$ be iid with $\Pr(Y_i=1)=p\in(0,1)$. Let $X_i$ equal the positive row specified below for its class and use the same identity base $A_{ij}=(X_i)_j$ in both laws. Set $C=\sum_{i=1}^9\ind\{Y_i=1\}\sim\operatorname{Bin}(9,p)$. The feature maps below read $X$, not a query label. The rank is $k=9$.

For the helpful law, use base rows $(19,1)$ on class 1 and $(3,2)$ on class 2. Direct substitution in (\ref{eq:generalweight}) gives $C_0=\{1\}$ for a class 1 target, and $C_0=\{1,2\}$ for a class 2 target unless $C=9$, when it is empty. The count rule changes only class 2 targets at $C\in\{7,8\}$, removing the false label 1. Indeed $f(C)/f(9-C)=\max(10-C,5)/\max(C+1,5)$ crosses below $2/3$ exactly at $C=7$; at $C=9$ both candidates remain excluded. Hence both procedures have coverage $1-(1-p)p^9$, while
\[
 \mathbb E|C_0|-\mathbb E|C_f|=(1-p)\Pr\{C\in\{7,8\}\}>0.
\]

For the harmful law, replace the two base rows by $(3,2)$ and $(1,19)$. Uniform sets are the correct singleton, except that a class 1 target gets the empty set at $C=0$. The count rule changes only class 1 targets at $C\geq7$, adding the false label 2. The same weight-ratio boundary gives the comparison; all true-label decisions remain unchanged. Both cover with probability $1-p(1-p)^9$, and
\[
 \mathbb E|C_f|-\mathbb E|C_0|=p\Pr\{C\geq7\}>0.
\]
All comparisons use inclusive ties. The supplied ten-count decision tables verify each substitution exactly, rather than using simulation to estimate either probability.
\end{proof}

\begin{table}[ht]
\centering
\begin{tabular}{lrrr}
\toprule
Law, $p=4/5$ & Coverage, both (\%) & $\mathbb E|C_0|$ & $\mathbb E|C_f|$\\
\midrule
Helpful & 97.31564544 & 1.1463129088 & 1.0255169536\\
Harmful & 99.99995904 & 0.9999995904 & 1.5905575936\\
\bottomrule
\end{tabular}
\caption{Exact iid efficiency witnesses at nominal $90\%$. Coverage is identical \emph{within} each row, not equal between the two laws. Differences come exclusively from false labels. Values are exact rational probabilities, not empirical estimates.}\label{tab:efficiencywitness}
\end{table}

\paragraph{What this settles.} The helpful construction refutes the claim that uniform weights minimize expected size throughout the valid family. The harmful construction refutes the opposite universal ranking for this same rule. These are existence results, not natural-data performance claims or optimality over all independently fitted class-specific constants. The two-class example does not require $K\alpha\geq1$: that restriction belongs to the converse, not the sufficient validity or efficiency statements. Moreover, the feature bases here need not be accurate classifiers; classification perfection belongs to the separate failure constructions below.

\section{Applying the validity criterion}
For general separable scores, $C_E\cup C_{\rm loo}$ preserves every ordinary inclusion and the reference guarantee without assuming count monotonicity. The new \texttt{general\_scores/rank\_reference.py} implements this reference and guard from ordinary scores and each calibration row's score at one fewer own-class count. A fixed nondecreasing rule needs only ordinary calibration. For the normalized family, the earlier selective guard $C_E\cup C_A$ remains valid and its original code and checks are preserved. These safeguards use distinct full conformal references.

The criterion does not authorize calibration-supervised feature training, outcome-tuned penalties, arbitrary iterative transport, a changing label space, or selective splitting without their own arguments. Finite-sample validity also says nothing about relative set size; Theorem~\ref{thm:efficiency} and the matched fixed-weight controls address that separate question.

\section{How severe can invalid reuse be?}\label{sec:failure}
\paragraph{Transport as an application.} For positive $G_{ih}=\exp\{z_h(X_i)/\tau\}$, one cycle rescales each class column to weights $d_h=c_h+a$, then normalizes rows. It has (\ref{eq:generalweight}) with $A_{ih}=G_{ih}/\sum_iG_{ih}$ and $f(c)=c+a$. Thus $C_E\subseteq C_A$ for one cycle. Finite sample nesting need not survive later cycles; Appendix~\ref{app:threecycle} retains an exact counterexample. Coverage validity of full augmentation instead follows from its labeled symmetry, including at the unique positive-kernel limit.

\begin{proposition}[An exact iid fixed catalog failure]\label{thm:collapse}\label{eq:collapse}
Draw labels iid from probabilities $\pi$ on fixed $K\geq2$ classes. Set $X=e_Y$ and use the fixed feature-only logits $z_j(X)=(\log r)X_j$. With a unit pseudocount, one transport cycle, and $r\geq n+2$,
\begin{equation}\label{eq:collapse_new}
\Pr\{Y_*\in C_E\}=\sum_h\pi_h\Pr\{\operatorname{Bin}(n,\pi_h)>n-k_\alpha\}.
\end{equation}
At $K=20$, uniform labels, $n=200$, $\alpha=.1$, and $r=202$, original accuracy is one and confidence is $202/221=91.4027\%$, but prediction coverage is $0.266458\%$.
\end{proposition}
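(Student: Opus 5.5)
The plan is to compute the one-cycle transport scores in closed form, and to show that every other-class calibration row is strictly more conforming than the target while same-class rows tie with it. Coverage then reduces to a binomial count condition.

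\emph{Step 1 (closed form).} With $X_i=e_{Y_i}$ and $z_j(X)=(\log r)X_j$, and taking $\tau=1$ (the statement does not fix $\tau$; any $\tau$ is equivalent to rescaling $r$), we get $G_{ih}=r^{\ind\{Y_i=h\}}$. By the transport paragraph, one cycle is an instance of \eqref{eq:generalweight} with $A_{ih}=G_{ih}/T_h$ and $f(c)=c+1$, where $T_h=\sum_iG_{ih}$ is taken over the pooled rows. I take the single-query case $M=1$ for concreteness, so the pooled column count of class $h$ is $N_h=c_h+\ind\{h=Y_*\}$. This gives $T_h=rN_h+(n+1-N_h)$. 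Writing $a_h=(c_h+1)/T_h$, every row of true class $y$ has the true-label probability
\[
P_y=\frac{r\,a_y}{r\,a_y+\sum_{l\neq y}a_l}.
\]
This holds for the target, scored at $c$, and for calibration rows, which are also scored at $c$ in the ordinary procedure. Hence calibration rows sharing the target's label tie with it exactly, and such ties are included.

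\emph{Step 2 (strict separation).} Let $y=Y_*$.
\begin{itemize}
\item For the target's class, $r a_y=r(c_y+1)/\bigl(r(c_y+1)+n-c_y\bigr)<1$ whenever $c_y<n$. If $c_y=n$ there are no other-class rows and nothing needs to be shown.
\item For $j\neq y$, $r a_j=r(c_j+1)/\bigl(rc_j+n+1-c_j\bigr)>1$ exactly when $r>n+1-c_j$. This is guaranteed by $r\geq n+2$.
\item The off-diagonal sums satisfy $\sum_{l\neq j}a_l-\sum_{l\neq y}a_l=a_y-a_j$, and $a_y<a_j$ by the previous two bullets.
\end{itemize}
Since $P=x/(x+s)$ is increasing in $x$ and decreasing in $s$, it follows that $P_j>P_y$. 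So every one of the $n-c_y$ other-class calibration rows is strictly more conforming than the target.

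\emph{Step 3 (coverage formula).} By \eqref{eq:rank}, the target is included iff $n-c_y<k_\alpha$, that is, iff $c_y>n-k_\alpha$. Conditioning on $Y_*=h$, independence gives $c_h\sim\operatorname{Bin}(n,\pi_h)$. Averaging over $h$ then yields \eqref{eq:collapse_new}.

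\emph{Step 4 (numerics).}
\begin{itemize}
\item With $K=20$, $n=200$, $\alpha=.1$ and uniform $\pi$, we have $k_\alpha=\lceil 201\cdot .9\rceil=181$. Coverage is therefore $\Pr\{\operatorname{Bin}(200,1/20)>19\}$, which I would evaluate exactly to obtain $0.266458\%$.
\item Original accuracy is one because the base $G$ puts its maximum on the true class.
\item The quoted confidence $202/221=r/(r+K-1)$ is the softmax of the logits $z(X)$, i.e.\ the unadjusted base confidence before transport.
\end{itemize}

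The main obstacle is Step 2. The column normalizers $T_h$ depend on pooled features, including the target's own row, so the asymmetry between $N_h=c_h+1$ for the target's class and $N_h=c_h$ for the others must be tracked carefully. The argument also has to hold uniformly over all count vectors, including extreme ones like $c_j=0$. This is where the threshold $r\geq n+2$ enters. If $M>1$, I would redo the $T_h$ bounds with the extra query rows included and check that the same threshold still suffices.
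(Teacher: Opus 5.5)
Your proposal is correct and follows the paper's proof: the same closed-form one-cycle multipliers $(c_h+1)/[\,n+1+(r-1)N_h]$, strict dominance of every non-target multiplier over the target's, exact same-class ties, and the count condition $c_{Y_*}>n-k_\alpha$ yielding the binomial mixture. The only difference is cosmetic. You separate the multipliers by sandwiching them around $1/r$, handling the vacuous case $c_y=n$ separately, whereas the paper directly checks that the numerator $P(N_j+1)+(r-1-P)N_h$ of $b_j-b_h$ is positive.
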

Appendix~C proves the formula by counting the other-class calibration rows strictly above the target; no Monte Carlo approximation enters it.

This is one specified mathematical population and predictor, not a claim of natural prevalence. Small continuous feature perturbations preserve it as an upper bound (Appendix~\ref{app:continuousfixed}). A stronger converged construction keeps the feature noise fixed rather than shrinking it: for uniform $K>1/\alpha$, fixed $a,\eta>0$, a fixed continuous feature law admits feature-only scores with correct-class boost
\begin{equation}\label{eq:logboost}
\beta_n\geq\log(n+1)+8\eta+\log\{32K(K-1)\}
\end{equation}
whose confidence approaches one while converged empirical coverage approaches zero. The sufficient rate is logarithmic, not claimed necessary. Appendix~\ref{app:logproof} gives the complete quota comparison and constants. The score sharpens with $n$; the statement is not inconsistency for a fixed classifier.

\paragraph{Why convergence does not cure the information asymmetry.} Calibration-fitted quotas leave the omitted target class short while allocating surplus mass to the other classes. At sufficiently sharp scores, non-target rows fit their own classes better than target-class rows, changing the rank even at the exact optimizer. Separately, when score contrast stays fixed but $K$ grows with $n$, a calibration observation sees the size-biased count $1+\operatorname{Pois}(\lambda)$ while an independent target sees $\operatorname{Pois}(\lambda)$. Appendix~\ref{app:limits} proves the resulting coverage limits uniformly over all positive iteration counts and convergence. These two regimes explain different failures; neither assumes that optimizing longer restores labeled symmetry.

\section{Evidence on actual model scores}\label{sec:evidence}
The original frozen study, later analyses of the same image pools, and exact constructions remain distinct. The natural experiments use the same calibration and target sampling law; their intervals condition on the extracted pools, not new environments.

\subsection{Frozen study: a modest but resolved finite cycle effect}
After development on EuroSAT \citep{helber2019} and Oxford Pets \citep{parkhi2012}, we froze model revisions, prompts, positive pseudocount $a=1$, sampling seeds, budgets, temperatures, and 13 comparison methods. We then extracted CLIP B/32, CLIP L/14, and SigLIP B/16 scores \citep{radford2021,zhai2023} on CIFAR10, Food101, and FGVC Aircraft \citep{krizhevsky2009,bossard2014,maji2013}. Hash selection without outcomes retained 6,000, 6,000, and 3,333 official test images. The same 256 temperature-fitting identities per dataset are excluded from every evaluation draw. Native and independently fitted temperatures are separate arms.

For each $n/K\in\{2,6,20\}$, 2,048 independent bags sample $n+1$ observations with replacement from the fixed remaining pool; all possible target roles are integrated for one cycle. The first 512 bags also evaluate three cycles. A whole bag is the independent unit, not an individual role. Empirical Bernstein intervals \citep{maurer2009} use a fixed $.05$ allocation over 648 endpoints. Duplicate image identities are retained as required by this conditional iid sampling law.

\begin{table}[t]
\centering\footnotesize\setlength{\tabcolsep}{3pt}
\caption{All $27$ primary native temperature, one cycle conditions at nominal $90\%$. Values are mean coverage percentages with simultaneous $95\%$ intervals from the fixed family of $648$ endpoints. Each cell integrates all target roles in $2,048$ independent bags. Interval endpoints are rounded outward. Eight upper bounds lie below $90\%$; all eight occur at $n=2K$.}
\label{tab:confirmation}
\begin{tabular*}{\linewidth}{@{\extracolsep{\fill}}llccc@{}}\toprule
Encoder & Data & $n/K=2$ & $n/K=6$ & $n/K=20$\\\midrule
CLIP B/32 & CIFAR10 & 88.38 [86.78, 89.99] & 89.30 [87.93, 90.66] & 89.78 [88.50, 91.06]\\
CLIP B/32 & Food101 & 88.53 [87.20, 89.87] & 89.39 [88.11, 90.67] & 89.78 [88.52, 91.03]\\
CLIP B/32 & Aircraft & 87.10 [85.72, 88.48] & 88.87 [87.58, 90.16] & 89.69 [88.43, 90.95]\\
CLIP L/14 & CIFAR10 & 88.34 [86.75, 89.93] & 89.32 [87.96, 90.68] & 89.82 [88.54, 91.10]\\
CLIP L/14 & Food101 & 88.22 [86.87, 89.58] & 89.33 [88.05, 90.61] & 89.77 [88.51, 91.02]\\
CLIP L/14 & Aircraft & 86.95 [85.56, 88.34] & 88.76 [87.47, 90.06] & 89.65 [88.39, 90.91]\\
SigLIP B/16 & CIFAR10 & 89.06 [87.53, 90.59] & 89.53 [88.18, 90.88] & 89.84 [88.57, 91.12]\\
SigLIP B/16 & Food101 & 88.51 [87.17, 89.86] & 89.44 [88.16, 90.71] & 89.79 [88.53, 91.04]\\
SigLIP B/16 & Aircraft & 87.28 [85.92, 88.65] & 88.88 [87.59, 90.17] & 89.67 [88.41, 90.93]\\
\bottomrule\end{tabular*}
\end{table}

At nominal $90\%$, eight of 27 primary native one cycle conditions have simultaneous upper coverage limits below $90\%$, all at $n=2K$ (Table~\ref{tab:confirmation}). Across those nine small-budget cells, coverage is $86.95\%$--$89.06\%$: approximately $10.94$--$13.05$ misses per 100 rather than the nominal ten. These are modest deficits. At $n/K=6$ or 20, this simultaneous family resolves none; absence of detection is not equality to the nominal level. All conditions, including the unresolved SigLIP/CIFAR10 cell, remain in the table.

The theorem's comparison explains the direction without demanding that every cell violate nominal coverage. Rank rounding and ties can provide surplus coverage. In each one cycle bag, full augmentation's acceptance fraction minus the ordinary fraction is exactly the fraction of augmented-accepted target roles lost through count reuse. Those roles are never gained in the opposite direction. Appendix~\ref{app:erosion} gives the identity and efficient audit; it is an offline labeled calculation, not a guarantee inferred from unknown target labels.

\subsection{Later convergence analysis: a larger natural score failure}
The convergence follow-up was designed after those outcomes were known. It reuses all nine native score caches and all three budgets, with 1,024 independently sampled calibration-plus-one-target episodes per cell. Cycles 1, 3, 10 and the certified limit share these identities and a separate 432-endpoint confidence allocation. It is not another untouched dataset study.

\begin{figure}[ht]
\centering\includegraphics[width=\linewidth]{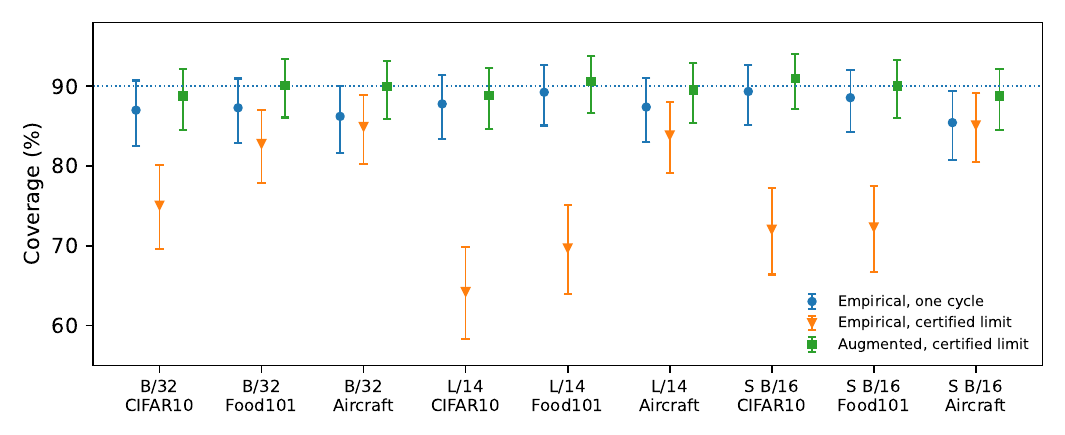}
\caption{\textbf{Coverage at native model temperatures.} Later analysis on the known pools at $n=2K$, with a fixed native temperature and no artificial score sharpening. Each point uses 1,024 independent target episodes; bars use the separate simultaneous allocation. This figure does not replace the frozen one cycle study. All nine pairs and both empirical and augmented limits are shown.}
\label{fig:naturalaudit}
\end{figure}

At the certified limit, 14 of 27 empirical conditions resolve undercoverage at nominal $90\%$: all nine at $n=2K$, five at $n=6K$, and none at $n=20K$. Thus the convergence finding is not confined to two calibration observations per class. At $n=2K$, empirical coverage ranges from $64.26\%$ to $85.16\%$, with a descriptive median of $75.10\%$ across the nine fixed cells. Those means correspond to $1.48$--$3.57$ times the nominal miss budget. These are fixed-cell summaries, not uncertainty estimates over new environments.

For CLIP L/14 on CIFAR10, original classification accuracy is $95.33\%$ on the evaluation pool. At $n=20$, converged empirical coverage is $64.26\%$ with simultaneous interval $[58.31\%,69.92\%]$; the augmented estimate is $88.87\%$ with interval $[84.64\%,92.31\%]$. These are native pretrained scores, distinct from the constructed $0.266458\%$ example. The augmented procedure has no significant deficit in this family; that does not prove exact nominal coverage. The complete convergence tables are in Appendix~\ref{app:empirical}. Table~\ref{tab:confidencecheck} also reports an above nominal point estimate in the highest confidence subgroup, with a wide interval; aggregate undercoverage is not undercoverage in every subgroup.

\paragraph{Numerical certification.} All 55,296 returned limit fits have valid score enclosures and resolved target ranks under the stated arithmetic contract. Before counting ranks, the certificate bounds error relative to the exact positive rounded-kernel limit. Any unresolved decision would widen the statistical interval. Appendix~\ref{app:numerics} gives the contraction argument, rounding assumptions, and alternative-formulation checks.

\subsection{Valid alternatives remain strong}
A separate frozen schedule computes complete sets for 128 calibration-plus-16-query episodes per condition. All 13 methods share total calibration/fitting label budgets, with temperature-fitting labels charged separately. At native temperature and nominal $90\%$, empirical three cycle transport averages 4.767 labels and $87.70\%$ coverage; full augmentation averages 5.237 and $90.12\%$; uniform transport averages 5.254 and $90.04\%$. These descriptive means are not a significance test of efficiency. Every method and condition remains in the artifact, with all 13 native averages in Appendix~\ref{app:empirical}.

A later directional study evaluates raw and column-normalized bases and every exponent in $\{-1,-.5,0,.5,1\}$ on the same pools. Negative exponents obey the proved containment but do not beat constant weights on mean set size. Those original aggregate results do not establish universal dominance. The expanded comparison below separates the potential value of class adjustment from the incremental effect of live count reuse.

\subsection{Expanded safe-rule search with independent selection}\label{sec:expanded}
This new known-pool analysis fixes 55 candidates: uniform, eight powers, ten clipped functions, twelve step functions, nine shrinkage functions, and fifteen class-specific functions. The last group permits different fixed powers or thresholds per class and confidence-based multiplicative constants. All functions are positive and nonincreasing conditional on the 256 previously excluded fitting rows. The final evaluation labels never choose the function. The guard therefore equals the ordinary set; no candidate refit is needed for these rules.

For each of the nine existing caches, three budgets, two feature bases, and two nominal levels, 32 fitting-pool bootstrap episodes choose a global winner and a winner within each family by mean complete-set size. A separate 512-episode evaluation computes all candidate sets with 16 queries per episode. Every fitting label is charged, also to the baselines. Each candidate is paired with a static control obtained by evaluating that same function at the independently estimated counts $n\widehat\pi_h$ and then freezing its weights. This control isolates live counts from class information learned independently. Appendix~\ref{app:utility} states the candidate formulas, selection rule, sampling scope, and full per-cell files.

\begin{table}[ht]
\centering
\begin{tabular}{llrrrr}
\toprule
Base & Nominal (\%) & Uniform & Selected & Frozen match & Lower/equal/higher\\
\midrule
Raw &90&7.0563&6.6677&6.5475&12/4/11\\
Column &90&5.7211&5.7728&5.7141&2/19/6\\
Raw &95&10.4201&9.8754&9.6776&15/0/12\\
Column &95&8.1824&8.2324&8.1522&4/16/7\\
\bottomrule
\end{tabular}
\caption{Expanded search: equal-cell mean set sizes across 27 cache/budget cells. Each selected rule was chosen using only the separate fitting pool. The last column counts its evaluation size relative to uniform, not significant wins. The matching frozen rule is selected by the same independent choice, not by evaluation results.}\label{tab:expanded}
\end{table}

At nominal $90\%$, raw-base selection gives smaller point estimates than uniform in 12 of 27 cells, but larger ones in 11; its average coverage is $90.40\%$. Column-base selection improves two cells and worsens six, with $90.31\%$ average coverage. No selected comparison resolves a size advantage in the conservative simultaneous empirical Bernstein family. In all four aggregate rows, matching independently frozen weights are smaller than the live-count selection. Thus the exact theorem establishes \emph{possibility}; these data neither prove uniform dominance nor establish a statistically resolved deployment advantage from live counts. All family winners, all 55 candidates, and all per-cell differences are reported.

\subsection{Conf-OT proper: completed native benchmark subset}\label{sec:native}
We separately executed the released extractor and benchmark driver at commit \texttt{1a29ced}, using its DTD and FGVC Aircraft test splits \citep{cimpoi2014,maji2013,silva2025}. CLIP ViT-B/16, original prompts and preprocessing, stratified half calibration, three transport cycles, no pseudocount, and the released LAC/APS/RAPS settings were retained. Only the documented dataset-root configuration changed. All 12 commands completed, giving 24 dataset/method/score/level cells and 480 split records. Both possible result interpretations were specified before execution.
\begin{table}[t]
\centering\small
\caption{Completed native Conf-OT subset. Median coverage (\%) across 20 released stratified splits per cell. Unadapted is the repository's \texttt{none} branch. These are descriptive medians, not simultaneous confidence bounds. Full sizes and split ranges are in Appendix~\ref{app:native}.}\label{tab:native-main}
\begin{tabular}{llrrrr}
\toprule
 & & \multicolumn{2}{c}{Nominal 90\%} & \multicolumn{2}{c}{Nominal 95\%}\\
Data & Score & Unadapted & Conf-OT & Unadapted & Conf-OT\\
\midrule
DTD & LAC & 89.95 & 89.48 & 95.27 & 94.33\\
DTD & APS & 90.25 & 89.89 & 95.45 & 94.62\\
DTD & RAPS & 90.25 & 89.95 & 95.27 & 94.62\\
Aircraft & LAC & 89.76 & 89.97 & 95.12 & 94.82\\
Aircraft & APS & 89.91 & 89.85 & 94.97 & 95.29\\
Aircraft & RAPS & 89.94 & 89.88 & 95.00 & 95.12\\
\bottomrule
\end{tabular}
\end{table}

Conf-OT median coverage is 89.48--89.97\% at nominal 90\% and 94.33--95.29\% at nominal 95\% (Table~\ref{tab:native-main}). Median set sizes are 17.61--29.54\% smaller than the released unadapted baseline's across the twelve paired cells. These descriptive comparisons neither establish exact nominal coverage nor a size advantage at matched true coverage; shared images across repetitions are not independent binomial trials.

The split fixes each class count before choosing within-class identities: DTD always has 18 calibration images per class, so its fitted prior is exactly uniform. Aircraft has 16 or 17, also fixed across splits. Thus this benchmark does not vary the prior through the iid self-count mechanism studied above. Fixed quotas do not by themselves prove every pooled-rank guarantee. The subset changes several design components relative to our iid study, so it is not a causal isolation of stratification. Appendix~\ref{app:native} gives the precise differences, all outcomes, and provenance. The reproduction covers this two-dataset subset of the 15-dataset Conf-OT benchmark; native TIM, TransCLIP, FCA, and SCA-T are outside its scope.

\section{Related work, limitations, and conclusion}\label{sec:conclusion}
The full conformal reference is classical \citep{shafer2008}; label conditional ranks \citep{vovk2012,ding2023} and stability-based outer sets \citep{ndiaye2022} are established alternatives. Valid adaptive transductive inference \citep{gazin2024} does not assert that every label-dependent transformation is safe. Work on conformal model selection \citep{liang2026} and tuning bias \citep{zeng2025} concerns different reuse mechanisms. Building on class frequency adjustment \citep{menon2021}, we characterize its ordinary-rank validity boundary when \emph{final calibration labels} determine the correction.

\paragraph{Class frequency methods and scope.} \citet{liu2026tail} use TACP and soft TACP to improve head/tail or class-level coverage balance. Our new corollary covers its nonnegative additive rank-penalty form when the prior uses final calibration counts and all other fitted choices meet the stated information restrictions. It does not reproduce their experiments or certify adaptive tuning. \citet{xie2025} study unseen labels and frequency-dependent selective splitting with weighted ranks; an open catalog and a changed sampling design remain outside this paper. The new result is a count-transfer sufficient criterion with sharp common-map specializations, not a characterization of every calibration-trained scoring rule. The full conformal leave-self-out reference is classical; the score comparison is the argument that justifies ordinary reuse.

The natural data findings are conditional on finite image pools; confidence intervals cover calibration randomness, not unseen environments. Later analyses remain retrospective. Section~\ref{sec:native} separately reports the released Conf-OT code on its two default benchmark datasets under the original stratified design. Its near-nominal medians do not contradict the iid counterexamples, and neither study certifies arbitrary deployments. Original history, competitive baselines, and the small historical reconstruction discrepancy remain in the evidence record.

Validity and efficiency require separate answers. The transfer criterion covers additive and normalized scores, with sharp common-map boundaries in the stated regime. Even one fixed safe rule can improve or worsen expected size relative to uniform weights at unchanged coverage. The expanded comparisons preserve strong independently fitted controls. The conclusion is not that count reuse is always worthwhile or never worthwhile: its validity can be characterized, while its utility must be established for the intended population and comparator.
\label{maintextend}

\clearpage
\section*{Reproducibility statement}
The artifact contains the original image studies, their registered conditions, subsequent analyses, source and input hashes, and exact mathematical checks. The expanded safe-rule study uses separate fitting labels for selection. The author executed the native Conf-OT notebook on DTD and Aircraft; the archive contains all 24 benchmark cells, all 480 split records, feature caches, commands, receipts, and job logs. Its launcher matches the distributed version. Internal computational audits reconstruct the native summaries from saved outputs and check input, output, protocol, and source-change records. The complete proof of the central theorem is in the main text. Historical manuscripts and study records are retained in the archive.
\section*{AI use statement}
The author led the research and evaluation, with generative AI assistance in mathematical development, coding, analysis, and writing. The author remains responsible for all claims.
\section*{Ethics statement}
This work uses public research image datasets and synthetic mathematical examples. It recruits no participants. The evidence artifact does not redistribute raw photographs or pretrained weights; reacquisition remains subject to original terms. The earlier custom image study removes the Aircraft attribution strip; the native reproduction instead retains the released image preprocessing. Both input protocols are documented separately. The analysis concerns specified method configurations and sampling assumptions. Marginal coverage is not an individual or clinical safety guarantee.

\clearpage
\appendix
\section{How to read the supporting material}\label{app:map}
The central validity theorem and its complete proof are in Section~\ref{sec:boundary}. Appendix~\ref{app:guard} details the information restrictions and implementation. Appendices~\ref{app:continuousfixed}--\ref{app:numerics} prove the secondary iid, asymptotic, and numerical statements. Appendix~\ref{app:empirical} records sampling and full comparisons, and Appendix~\ref{app:verification} documents the computational checks. The artifact also preserves the original 60 page development record, historical experiments, and additional constructions.

\begin{table}[h]\centering\small
\begin{tabularx}{\linewidth}{@{}p{.27\linewidth}p{.33\linewidth}X@{}}\toprule
Claim & Quantification & Evidence type\\\midrule
Count-transfer sufficiency & Every $n$ and admissible exchangeable law; fixed feature marks and maps satisfying \eqref{eq:transfer} & Leave-self-out reference containment, main text.\\
Common-map converses & Fixed $K,\alpha$, map; $K\alpha\geq1$; separable or normalized specialization & Exact balanced construction, main text; normalized continuous detail below.\\
Iid fixed-catalog collapse & Specified iid law and predictor at a fixed $n$; sequence sharpens for the vanishing limit & Binomial calculation, main text and Appendix~\ref{app:continuousfixed}.\\
Noisy converged collapse & Fixed feature law, $K$, $a$, noise; score boost grows as stated & Deterministic quota lemma and a count tail, Appendix~\ref{app:logproof}.\\
Growing catalog limits & $K\to\infty$, $n/K\to\lambda$; fixed contrast and bounded intensities & Concentration, contraction, and dependent quantile argument, Appendix~\ref{app:limits}.\\
Native model findings & Conditional on each fixed image pool, under declared repeated sampling & Complete frozen or explicitly later confidence families, Appendix~\ref{app:empirical}.\\\bottomrule
\end{tabularx}
\caption{Scope and supporting evidence for each result. Pointwise identities, distributional theorems, numerical enclosures, and confidence intervals have distinct roles.}
\end{table}

\paragraph{Continuous normalized converse.}
For the balanced witness in Theorem~\ref{thm:sharpcommon}, let $q=f(m-1)/f(m)<1$ and $R>1$. The ratio of an other-class true probability to the target true probability is
$\Lambda=(Rq+K-1)/[q(R+K-2+q)]>1$.
Perturb each log base entry by independent uniform noise in $[-\eta,\eta]$, with $8\eta<\log\Lambda$ and $2\eta<\log R$. Normalization changes each probability by a factor in $[e^{-2\eta},e^{2\eta}]$, so all sufficient cross-class comparisons remain strict. The original classifier remains correct, and its confidence is at least $R/[R+(K-1)e^{2\eta}]$. Taking $R$ large and then $\eta$ small yields any prescribed confidence below one. The label law remains balanced and exchangeable, not iid; the conditional feature law has a density. This strengthens only the normalized converse, not every arbitrary scalar-map witness.

\section{Information restrictions, selective augmentation, and scope tests}\label{app:guard}
The base is a measurable function of pooled features and separately conditioned fitting information. The feature map, class catalog, count functions, and all hyperparameters must be fixed without the final calibration labels. Permuting calibration and target rows must permute the output rows. The conditional exchangeability requirement does not follow automatically for an adaptively chosen query pool. A base may be highly predictive of labels; being feature only does not require statistical independence from labels.

The direction proof remains valid for any fixed class specific positive functions. Nonmonotone functions can have different directions at different realized counts. The common-function converse fixes the same $f$ across all sample sizes. Its $m=1$ witness has an unobserved target class in calibration, with all weights still positive; at $m\geq2$ every class is observed. Constant rescaling of all weights cancels, but tuning a function from final calibration performance is not such a harmless rescaling.

\paragraph{The normalized all-count reference.}
The general transfer theorem uses $C_{\rm loo}$; the earlier normalized implementation instead uses $C_A$. Its original containment has a separate elementary proof. For a candidate $h$, put $\rho_h=f_h(c_h+1)/f_h(c_h)$. Then
\begin{equation}\label{eq:generalupdate}
p_i^{(h)}(j)=\frac{p_i(j)\rho_h^{\ind\{j=h\}}}{1+(\rho_h-1)p_i(h)}.
\end{equation}
For the query and same-label calibration rows the map $v\mapsto\rho_h v/[1+(\rho_h-1)v]$ is strictly increasing. If $\rho_h\leq1$, their probabilities decrease while other-label true probabilities increase. The number of strictly better calibration probabilities cannot decrease, so $h\in C_A$ implies $h\in C_E$. The implications reverse for $\rho_h\geq1$. This extra structure is not assumed for arbitrary additive or separable scores.

\paragraph{Selective guard.} Put $\rho_h=f_h(c_h+1)/f_h(c_h)$ and compute the ordinary set first. If $\rho_h\leq1$, the normalized update above gives $\ind\{h\in C_A\}\leq\ind\{h\in C_E\}$. If $\rho_h>1$, the reverse inclusion holds. Therefore the rule
\[
h\in C_G\quad\Longleftrightarrow\quad h\in C_E\ \text{or}\ [\rho_h>1\text{ and }h\in C_A]
\]
is exactly $C_E\cup C_A$. It preserves the full reference coverage while retaining every ordinary inclusion. For a query batch, one hypothetical map for $h$ can be shared across all excluded query candidates. No fit is needed for a nonincreasing function, nor when all ordinary candidates of that class are already included. This is not a minimum-size procedure and may be more conservative than $C_A$.

The implementation \texttt{reviewer\_audit/count\_weight\_guard.py} accepts the positive base, final calibration labels, and fixed function values $f_h(0),\ldots,f_h(n+1)$. It offers ordinary, full augmented, and guarded modes. Under a standard binary64 positive-sum error model, probability intervals certify unambiguous comparisons; ambiguous comparisons fall back to exact rational arithmetic on the supplied binary64 inputs. Duplicate rows with the same candidate label retain exact ties. Invalid dimensions, nonpositive values, nonfinite arithmetic, or subnormal intermediates are rejected rather than silently altered. The information provenance and fixed-function requirements must be checked when configuring the inputs.

For a calibration supervised scoring pipeline outside this family, a valid full reference must refit \emph{every label-dependent stage} under each candidate, or use a separately justified method. The general sufficient principle is rank dominance: if the implemented count of strictly better calibration scores is no greater than that of an exchangeable reference, its set contains the reference. The count theorem derives that dominance from a local weight update; it does not assert it for arbitrary training. Reference unions provide a standard validity safeguard.

\paragraph{Worked example and negative scope fixtures.} For Figure~\ref{fig:mechanism}, the counts are $(5,4)$, weights $(6,5)$, and candidate-augmented weights $(6,6)$. Independent Fraction arithmetic checks every table entry, the ranks $9$ and $4$, and both candidate decisions. Additional fixtures show that a step beyond all reachable counts cannot invalidate a fixed-$n$ rule, that the balanced witness need not exclude when $K\alpha<1$, and that a base deliberately memorizing calibration labels can fail even with constant weights. These fixtures check the stated quantifiers and information restrictions.

\section{Continuous perturbations of the finite iid counterexample}\label{app:continuousfixed}
We first give the exact proof of Proposition~\ref{thm:collapse}.
\begin{proof}
Fix target $h$, set $P=n+1$ and $N_j=c_j+\ind\{j=h\}$. Up to a common factor, column multipliers are $b_h=N_h/[P+(r-1)N_h]$ and $b_j=(N_j+1)/[P+(r-1)N_j]$ for $j\ne h$. The numerator of $b_j-b_h$ is $P(N_j+1)+(r-1-P)N_h>0$. For a row of class $j$, its true probability $rb_j/[\sum_\ell b_\ell+(r-1)b_j]$ increases with $b_j$. Exactly $n-c_h$ calibration scores beat the target; same-class scores tie. Equation~(\ref{eq:rank}) gives acceptance iff $c_h>n-k_\alpha$, and iid counts give the formula.
\end{proof}
At $r=P+1$, all non-target multipliers equal $1/P$, while the target multiplier is $\kappa/P$, with $\kappa=N_h/(N_h+1)$. An other-class true probability divided by the target probability exceeds one by
\[
\frac{K-1+\kappa}{N_h(P+K-1+\kappa)}\ \geq\ \frac{K-1}{P(P+K)}=:\Delta_{n,K}>0.
\]
Multiplying each kernel entry by a factor in $[e^{-\eta},e^\eta]$ changes the one-cycle probabilities by at most $[e^{-4\eta},e^{4\eta}]$: one factor comes from the kernel, one from column normalization, and the remaining normalization gives another two. Thus an across-row probability ratio loses at most $e^{-8\eta}$. If
\[
8\eta<\log(1+\Delta_{n,K}),
\]
every other-class calibration score remains strictly above the target. Same-class ties can split, causing more exclusions but never invalidating the binomial \emph{upper} bound.

For instance, take $X=(\log202)e_Y+\xi$, with iid continuous $\xi_j\sim\operatorname{Unif}[-10^{-5},10^{-5}]$, $K=20$, $n=200$, and the identity logit map. Its original classifier is perfect, native confidence exceeds $91.4\%$, and empirical one-cycle coverage is at most $0.266458\%$. The exact equality example and this continuous upper bound must not be interchanged.

For uniform labels, the exact probability in Proposition~\ref{thm:collapse} is the rational number
\[
K^{-n}\sum_{j=n-k_\alpha+1}^{n}\binom nj(K-1)^{n-j}.
\]
The numerator and denominator are evaluated with integers. For any fixed $\pi_{\max}<\alpha$, choose $0<\epsilon<\alpha-\pi_{\max}$. At all sufficiently large $n$, acceptance implies $c_{Y_*}/n>\alpha-\epsilon$, and Hoeffding's inequality bounds coverage by $\exp[-2n(\alpha-\epsilon-\pi_{\max})^2]$. The accompanying predictor sharpens with $n$; this is not a fixed-classifier inconsistency statement.

\section{Full proof of the noisy converged collapse}\label{app:logproof}\label{app:sharpquota}
We first prove a deterministic quota lemma, then give the sampling argument. This proof replaces reliance on an optimization tolerance with a statement about the exact positive-kernel solution.

Fix $P$ rows, $K\geq2$, full label counts $N_j$, and target class $h$ with $N_h>0$. Positive quotas $Q_j$ sum to $P$ and satisfy $Q_j-N_j\geq s>0$ for $j\ne h$, and $\delta=N_h-Q_h>0$. Let $G_{ij}=\exp\{\beta\ind\{Y_i=j\}+\xi_{ij}\}$, with $|\xi_{ij}|\leq\eta$, and let $p$ be its exact row-stochastic fit with column sums $Q$.
\begin{lemma}[Exponential non-target error bound]\label{lem:sharpquota}
Put $u=e^{-\beta+2\eta}$ and $D=s-Pu$. If $D>0$, then every row of a non-target class has error at most
\begin{equation}
U=\frac{(K-2)e^{2\eta}u}{D}\left(\delta+\frac{P^2u^2}{D}\right)
       +\frac{Pu^2}{D}.
\label{eq:sharpU}
\end{equation}
Every target-class row has error at least $e^{-4\eta}\delta/N_h$. Consequently, $U<e^{-4\eta}\delta/N_h$ forces all non-target true probabilities strictly above all target-class true probabilities.
\end{lemma}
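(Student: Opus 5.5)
The plan is to read ``error'' of row $i$ as $1-p_i(Y_i)$, the mass the exact fit puts off the row's own class. Write the fit in scaling form $p_{ij}=a_iG_{ij}b_j$, with row factors $a_i>0$ and column factors $b_j>0$. Everything then reduces to comparisons between the column factors $b_j$, which the column-sum constraints $\sum_i p_{ij}=Q_j$ control. I would prove the target lower bound first, because it is elementary. Then I would prove the non-target upper bound through a self-bounding inequality, whose denominator is where $D=s-Pu$ comes from. The final consequence is a one-line comparison.

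\emph{Target rows.} Column $h$ has total mass $Q_h=N_h-\delta$, and the $N_h$ class-$h$ rows contribute at most that much through their diagonal entries. Their errors therefore sum to at least $\delta$, so the largest target error is at least $\delta/N_h$. For two class-$h$ rows $i,i'$, the odds
\[
p_{ih}/(1-p_{ih})=e^{\beta+\xi_{ih}}b_h\Big/\sum_{\ell\ne h}e^{\xi_{i\ell}}b_\ell
\]
cancel $a_i$. The two rows' odds therefore differ by a factor in $[e^{-4\eta},e^{4\eta}]$: $e^{\pm2\eta}$ from the numerator and $e^{\pm2\eta}$ from the denominator. Since error equals $1/(1+\text{odds})$, errors differ by at most the same factor. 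Hence every target-class error is at least $e^{-4\eta}\delta/N_h$.

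\emph{Non-target rows.} For a row $i$ of class $\ell$ and any column $j\ne\ell$, the fact that $p_{i\ell}\le1$ gives $a_i\le e^{-\beta-\xi_{i\ell}}/b_\ell$. Consequently
\[
p_{ij}\le u\,b_j/b_\ell .
\]
For $j\ne h$, column $j$ must receive at least $Q_j-N_j\ge s$ from rows outside class $j$. Each such row contributes at most $u\,b_j/b_{Y_i}$, so
\[
s\le u\,b_j\sum_{i:Y_i\ne j}1/b_{Y_i}.
\]
This forces each non-target $b_j$ to be large relative to the classes feeding it. The error of a class-$j$ row is at most $\sum_{\ell\ne j}u\,b_\ell/b_j$, which I would split into two parts:
\begin{itemize}
\item the $\ell=h$ term;
\item the $K-2$ other non-target terms.
\end{itemize}
For the other non-target terms, I expect to bound $b_\ell/b_j$ by comparing the leaked masses of columns $\ell$ and $j$. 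The target deficit $\delta$ is the only source of mass that non-target columns need beyond what they can redistribute among themselves. This should produce the factor $(K-2)e^{2\eta}u(\delta+\cdots)/D$. The $\ell=h$ term, together with second-order leakage of size $Pu^2$, gives the remaining $Pu^2/D$ and $P^2u^2/D$ pieces. The mechanism behind $D$: moving the at most $P$ rows' worth of $u$-size leakage to the left-hand side of the column inequality leaves the effective slack $s-Pu$, so any quantity $X$ satisfying $X\le(\text{known})+Pu\,X/s$-type bounds is controlled by $(\text{known})/D$.

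\emph{Main obstacle and conclusion.} The hardest step is the bookkeeping that turns the column inequalities into the exact constant $U$. In particular, the error must come out linear in $\delta$ with only quadratic-in-$u$ remainders, rather than a crude bound of order $Pu$. I would first try to sum the column constraints over all non-target columns, which yields one aggregate identity linking total leakage to $\delta$, and only then bound individual rows via the ratio inequalities above. Once both bounds are in hand, the consequence is immediate. Every non-target true probability is at least $1-U$, every target-class true probability is at most $1-e^{-4\eta}\delta/N_h$, and $U<e^{-4\eta}\delta/N_h$ makes the comparison strict.
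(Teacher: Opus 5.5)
Your target-row lower bound (class-$h$ errors sum to at least $\delta$ because $\sum_{i:Y_i=h}p_{ih}\le Q_h$, and any two of their odds differ by at most a factor $e^{4\eta}$) is correct and matches the paper. Your closing comparison is also correct and matches the paper.

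The gap is the non-target upper bound $U$, which is the substantive half of the lemma. You describe what you expect the bookkeeping to produce, but the tools you set up cannot produce it. Your column inequality $s\le u\,b_j\sum_{i:Y_i\ne j}1/b_{Y_i}$ gives only \emph{lower} bounds on a column factor relative to the others, and these are entangled with the small, uncontrolled $b_h$. Bounding $b_\ell/b_j$ from above for non-target $\ell,j$ also needs an \emph{upper} bound on $b_\ell$. That can only come from the limited total mass the target rows leak. Replacing the target rows' contribution by the upper bound $u\,b_j/b_h$ per row discards exactly this information, since the target rows are the ones that supply the surplus. You also have no bound on $b_h/b_j$, which the $\ell=h$ term needs. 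Finally, $D$ does not arise from a self-bounding estimate; it is a direct lower bound on a target-row mass.

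The missing idea is an extremal column, combined with ratio comparisons on the \emph{target} rows, where every non-target column is off-diagonal. Let $j_0$ minimize $b_j$ over $j\ne h$.
\begin{itemize}
\item A row of non-target class $k\ne j_0$ sends at most $u\,b_{j_0}/b_k\le u$ to column $j_0$. So the target rows must supply $A_{j_0}=\sum_{i:Y_i=h}p_{ij_0}\ge Q_{j_0}-N_{j_0}-Pu\ge D$.
\item On a target row, $p_{ih}/p_{ij_0}\ge u^{-1}b_h/b_{j_0}$, and for non-target $j$, $p_{ij}/p_{ij_0}\ge e^{-2\eta}b_j/b_{j_0}$.
\item Summing the first ratio over target rows gives $b_h/b_{j_0}\le uA_h/A_{j_0}\le Pu/D$. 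Hence $I_h=\sum_{i:Y_i\ne h}p_{ih}\le Pu\,b_h/b_{j_0}\le P^2u^2/D$.
\item Summing the second ratio gives $b_j/b_{j_0}\le e^{2\eta}A_j/A_{j_0}\le e^{2\eta}(\delta+I_h)/D$. This uses the identity $\sum_{j\ne h}A_j=\delta+I_h$, which your plan of summing the non-target column constraints does deliver.
\item For a row of non-target class $k$, bound each wrong-column mass $u\,b_j/b_k$ using $b_k\ge b_{j_0}$. The $K-2$ non-target columns give the first term of $U$, and column $h$ gives $Pu^2/D$.
\end{itemize}
Without the choice of $j_0$ and these target-row comparisons, your sketch leaves the non-target bound unproved.
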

\begin{proof}
Write $p_{ij}=G_{ij}b_j/\sum_kG_{ik}b_k$ with $b_j>0$. Select a non-target index $j_0$ minimizing $b_j$ among $j\ne h$. A row of non-target class $k\ne j_0$ satisfies
\[
p_{i,j_0}/p_{i,k}\leq u\,b_{j_0}/b_k\leq u.
\]
Its probability sent to $j_0$ is therefore at most $u$. Non-target rows with label $j_0$ contribute at most $N_{j_0}$ in total. Thus target-class rows must send
\[
A_{j_0}:=\sum_{i:Y_i=h}p_{i,j_0}\geq Q_{j_0}-N_{j_0}-Pu\geq D.
\]
On a target row, $p_{ih}/p_{i,j_0}\geq u^{-1}b_h/b_{j_0}$. Summing and using $A_h\leq P$ yields $b_h/b_{j_0}\leq Pu/D$. It follows that the total non-target mass entering column $h$ is bounded by
\[
I_h:=\sum_{i:Y_i\ne h}p_{ih}\leq Pu\,b_h/b_{j_0}\leq P^2u^2/D.
\]
Column conservation now gives $\sum_{j\ne h}A_j=\delta+I_h$. For two non-target columns on target-class rows, $p_{ij}/p_{i,j_0}\geq e^{-2\eta}b_j/b_{j_0}$. Hence
\[
\frac{b_j}{b_{j_0}}\leq e^{2\eta}\frac{A_j}{A_{j_0}}
\leq e^{2\eta}\frac{\delta+I_h}{D}.
\]
A row of non-target class $k$ sends at most $u b_j/b_k$ to any wrong class $j$. Sum the $K-2$ non-target wrong columns and the target column, using $b_k\geq b_{j_0}$ and the preceding bounds, to obtain \eqref{eq:sharpU}.

For target-class rows, the error odds are
\[
\frac{1-p_{ih}}{p_{ih}}=\sum_{j\ne h}e^{-\beta+\xi_{ij}-\xi_{ih}}b_j/b_h.
\]
Any two such odds, and therefore any two such errors, differ by at most a factor $e^{4\eta}$. Their total error is at least $N_h-Q_h=\delta$. Each is consequently at least $e^{-4\eta}\delta/N_h$. Comparing the two bounds proves separation.
\end{proof}
The lemma allows arbitrary full counts and bounded within-class variation. No iid assumption, entropy approximation, or numerical stopping rule enters it. Its hypotheses require a positive surplus in \emph{every} non-target quota; it does not cover arbitrary quota configurations.

\subsection{An explicit logarithmic sufficient boost}
\label{app:logboost}
Set $s_0=1/(2K)$ and $\delta_0=1-3/(2K)>0$. Suppose the quota assumptions hold with $s\geq s_0$ and $\delta\geq\delta_0$. We show that
\begin{equation}
\beta\geq\log P+8\eta+\log\{32K(K-1)\}
\label{eq:sharpboostappendix}
\end{equation}
is sufficient for strict separation, including at $K=2$. Write $t=Pu$. Then
\[
t\leq\frac{e^{-6\eta}}{32K(K-1)}=\frac{s_0e^{-6\eta}}{16(K-1)},\qquad D\geq s_0/2.
\]
The incoming mass bound is $I_h\leq 2t^2/s_0\leq s_0/2\leq\delta$. Multiplying \eqref{eq:sharpU} by $P$ gives
\[
PU\leq \frac{4(K-2)e^{2\eta}\delta t}{s_0}+\frac{2t^2}{s_0}
\leq\frac{\delta e^{-4\eta}}4+\frac{\delta e^{-4\eta}}4
<\delta e^{-4\eta}.
\]
For the second inequality, substitute the displayed bound on $t$ and use $\delta\geq\delta_0\geq s_0$; the first summand vanishes at $K=2$. Since $N_h\leq P$, Lemma~\ref{lem:sharpquota} applies. The constants are conservative and are not a lower bound on the required contrast.

\paragraph{Exact finite iid bound with a total pseudocount of one.}
Take the empirical prior with $a=1/K$, so the converged column quotas are, identically for every count vector,
\[
Q_j=P\frac{c_j+1/K}{n+1}=c_j+1/K.
\]
If the target class is $h$, then $N_j=c_j$ for $j\ne h$, $N_h=c_h+1$, and
\[
Q_j-N_j=1/K\ (j\ne h),\qquad N_h-Q_h=1-1/K.
\]
They satisfy the sufficient bounds above without a concentration event. If \eqref{eq:sharpboostappendix} holds, the target can be accepted only if $c_h>n-k_\alpha$: otherwise at least $k_\alpha$ other-label calibration probabilities are strictly higher. For any iid class law $\pi$, therefore,
\begin{equation}
\Pr\{Y_*\in C_E^{(\infty)}\}\leq\sum_h\pi_h\Pr\{\operatorname{Bin}(n,\pi_h)>n-k_\alpha\}.
\label{eq:sharpbintail}
\end{equation}
This is an upper bound rather than equality because within-class noisy scores may cause additional exclusions. With uniform $K=20$, $n=200$, and $\alpha=.1$, it is $0.266458\%$, for any fixed $\eta$ and the sufficient boost. This is a \emph{different} construction from the unit-pseudocount, one-cycle example with $91.4\%$ native confidence. The new bound holds at exact convergence with nonvanishing bounded noise and a specified total pseudocount of one.

For fixed $\pi_{\max}<\alpha$, the binomial upper bound decreases exponentially by the same Hoeffding argument as Appendix~\ref{app:continuousfixed}. Original correct-label confidence obeys
\begin{equation}
p_{\rm native}(Y\mid X)\geq[1+(K-1)e^{-\beta+2\eta}]^{-1}
\geq1-\frac{e^{-6\eta}}{32KP}.
\label{eq:sharpconfidence}
\end{equation}
Thus a logarithmic boost suffices while confidence approaches one at least as fast as $1-O(1/n)$.

\paragraph{Every fixed positive pseudocount.}
For a general fixed $a>0$ and uniform labels, reuse the fixed continuous law in the following explicit construction: $X=\mu e_Y+\eta V$ for fixed $\mu>2\eta$ and independent $V_j\sim\operatorname{Unif}[-1,1]$. The feature-only score
\[
z_{n,j}(X)=\beta_n\ind\{j=\arg\max X\}+X_j-\mu\ind\{j=\arg\max X\}
\]
has class boost $\beta_n$ and offsets $\eta V_j$. Neither the feature law, its noise, $K$, nor $a$ changes with $n$. For $K>1/\alpha$ set
\[
\epsilon=\min\left\{\frac{\alpha-1/K}{2},\frac{1}{2K\max(1,|1-Ka|)}\right\},\qquad
E_n=\{\max_j|c_j/n-1/K|\leq\epsilon\}.
\]
With probability at least $1-2Ke^{-2n\epsilon^2}$, for every class
\[
Q_j-c_j=\frac1K+\frac{n(1-Ka)}{n+Ka}(c_j/n-1/K)\in[1/(2K),3/(2K)].
\]
Consequently the non-target surplus and target deficit bounds $s_0,\delta_0$ hold simultaneously for every possible target label. For all $n$ satisfying $n(\alpha-1/K-\epsilon)\geq2-\alpha$, also $c_h\leq n-k_\alpha$ on $E_n$. If
\[
\beta_n\geq\log(n+1)+8\eta+\log\{32K(K-1)\},
\]
every target on $E_n$ is excluded and
\begin{equation}
\Pr\{Y_*\in C_E^{(\infty)}\}\leq2Ke^{-2n\epsilon^2}\longrightarrow0.
\label{eq:sharpuniformtail}
\end{equation}
This proves the general-pseudocount conclusion stated in Section~\ref{sec:failure}. It does not show that a logarithmic rate is necessary, that every fixed model has this limit, or that any artificial boost was applied to the natural datasets. Full augmentation remains valid by the reference rank argument in Section~\ref{sec:setup} for its permutation-equivariant ideal fit.

\section{Complementary growing catalog theory}\label{app:limits}
These results are supplementary to the main common-function characterization. They keep score contrast fixed while the catalog grows, unlike the fixed-catalog sharpening construction. Labels in the first two cases are iid uniform on the catalog; the last case uses the explicitly stated unequal probabilities.

Write $n/K\to\lambda\in(0,\infty)$, $c_h=\sum_{i\leq n}\ind\{Y_i=h\}$, $m_h=(c_h+a)/(n+Ka)$, and
\begin{equation}\label{eq:informative}
X_i=\mu e_{Y_i}+\eta U_i,\qquad U_{ih}\ \text{iid}\sim\operatorname{Unif}[-1,1],\qquad G_{ih}=\exp(X_{ih}).
\end{equation}
Here $a,\eta>0$ and finite $\mu$ are fixed; $\mu>2\eta$ makes the original classifier perfect. There is one target. Let $T_K\in\{1,2,\ldots\}\cup\{\infty\}$ be any measurable iteration choice. Put $F_\lambda(t)=\Pr\{\operatorname{Pois}(\lambda)\leq t\}$, with zero for $t<0$. When an integer $r\geq1$ satisfies
\begin{equation}\label{eq:phase}
F_\lambda(r-2)<\alpha<F_\lambda(r-1),
\end{equation}
the following limits hold.
\begin{proposition}[Occupancy coverage limits]\label{thm:flat}\label{thm:continuous}\label{thm:mixed}
For uniform labels and identical score rows, empirical coverage tends to $1-F_\lambda(r-1)$ for any positive number of cycles or convergence; full augmentation tends to $1-F_\lambda(r-2)$. For the continuous model (\ref{eq:informative}), assume
\begin{equation}\label{eq:bandgap}
e^{2\eta}<\min\{(r+a)/(r-1+a),(r+1+a)/(r+a)\}.
\end{equation}
Uniformly over iteration choices $T_K$, empirical coverage then tends to
\begin{equation}\label{eq:continuouslimit}
1-F_\lambda(r-1)-\frac{\lambda}{r}\{\alpha-F_\lambda(r-2)\}.
\end{equation}
For unequal class probabilities, let $\lambda_{K,h}=n\pi_{K,h}$ be uniformly bounded, with $K^{-1}\sum_h\delta_{\lambda_{K,h}}\Rightarrow H$ and mean $\lambda$. Draw $L\sim H^*$, where $H^*(d\ell)=\ell H(d\ell)/\lambda$, $Z\mid L\sim\operatorname{Pois}(L)$, and independent $V\sim\operatorname{Unif}[-1,1]$. Define $W_{\rm cal}=(Z+1+a)e^{\eta V}$ and $W_{\rm test}=(Z+a)e^{\eta V}$. At a unique continuous calibration $\alpha$ quantile $q_\alpha$, coverage tends to
\begin{equation}\label{eq:mixedlimit}
1-F_{\rm test}(q_\alpha)\leq1-\alpha,
\end{equation}
strictly when the coupling crosses $q_\alpha$ with positive probability. No score-band separation is required for this last formula.
\end{proposition}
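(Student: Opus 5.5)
The plan is to reduce every case to the same occupancy picture. A calibration row sees its own label's count \emph{including itself}, while the independent target sees a count that excludes itself. For uniform labels with $n/K\to\lambda$, the target's class count $c_{Y_*}$ is $\operatorname{Bin}(n,1/K)\Rightarrow\Poi(\lambda)$. The empirical distribution of $c_{Y_i}$ over calibration rows converges in probability to the size-biased law $1+Z$, $Z\sim\Poi(\lambda)$. I would prove the latter by computing the first two moments of the occupancy counts $\#\{h:c_h=t\}/K$, or by a bounded-differences inequality, which gives $\sup_t$ convergence uniformly on bounded $t$. With a fixed rank fraction $k_\alpha/(n+1)\to1-\alpha$, the phase condition \eqref{eq:phase} places $\alpha$ strictly between consecutive values $F_\lambda(r-2)$ and $F_\lambda(r-1)$. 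Hence random fluctuations of order $o(1)$ decide no boundary case with probability tending to one.

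\emph{Identical score rows.} First I show that every positive iteration count $T_K$, and the limit, produces true-label probabilities that are a strictly increasing function of the row's own class count. With identical rows, each column scaling step keeps all rows equal. The row-normalized probabilities are therefore proportional to $b_h$, with $b_h$ increasing in $c_h+a$; at convergence $p(h)=Q_h/P$. A target with count $t$ is then rejected iff at least $k_\alpha$ calibration rows have strictly larger count. The fraction of such rows tends to $\Pr\{1+Z>t\}=1-F_\lambda(t-1)$. Acceptance thus tends to the event $F_\lambda(t-1)>\alpha$, that is $t\geq r$, and coverage tends to $\Pr\{\Poi(\lambda)\geq r\}=1-F_\lambda(r-1)$. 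In full augmentation the target is scored with count $t+1$ and the calibration rows with their own counts. The same computation gives acceptance iff $t+1\geq r$, so coverage tends to $1-F_\lambda(r-2)$. A bounded-convergence step turns the conditional acceptance limits into marginal limits.

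\emph{Continuous model.} Here I would show that each true-label score equals, up to a uniformly vanishing relative error, a monotone function of $W=(\text{own count}+a)e^{\eta\tilde V}$, where $\tilde V$ is a bounded effective offset. The key claim is that, uniformly in $T_K$, the column multipliers stay within $1+o(1)$ of $c_h+a$ times a common factor. The argument would be a Hilbert-metric contraction of the Sinkhorn map, using that each column carries only $O(1)$ rows and that the off-diagonal mass has bounded intensity. Condition \eqref{eq:bandgap} makes the bands for counts $r-1$, $r$ and $r+1$ disjoint after noise. Only one comparison is then undecided by counts: a target with count $r$ against calibration rows of count $r$, that is $Z=r-1$.
\begin{itemize}
\item Calibration rows of count at least $r+1$ occupy a limiting fraction $1-F_\lambda(r-1)<1-\alpha$ of the calibration rows.
\item The rank threshold therefore falls inside the count-$r$ band at relative depth $(\alpha-F_\lambda(r-2))/\Pr\{Z=r-1\}$.
\item The target's noise is independent and has the same law within the band, so it is rejected with that conditional probability.
\end{itemize}
Multiplying by $\Pr\{\Poi(\lambda)=r\}=(\lambda/r)\Pr\{Z=r-1\}$ gives the correction in \eqref{eq:continuouslimit}. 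Targets with count $t\geq r+1$ are always accepted, and those with $t\leq r-1$ are always rejected.

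\emph{Unequal class probabilities.} The same reduction applies, with the calibration statistic converging to the law of $W_{\rm cal}$ under the size-biased mixture $H^*$; size-biasing now also acts on the intensity $L$. The target statistic converges to $W_{\rm test}$. Acceptance is asymptotically $\{W_{\rm test}>q_\alpha\}$. Uniqueness and continuity of the quantile give the limit \eqref{eq:mixedlimit}. The inequality follows from the coupling $W_{\rm test}<W_{\rm cal}$, and it is strict when that coupling crosses $q_\alpha$ with positive probability. No band gap is needed here, because continuity of $q_\alpha$ replaces it.

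\emph{Main obstacle.} I expect the hard step to be the uniform-in-$T_K$ approximation of the fitted scores. The multipliers $b_h$ are coupled across all $K\to\infty$ columns, and the errors must be controlled on the whole calibration sample simultaneously, not just for one row. I would first try to show that the normalizer $\sum_\ell G_{i\ell}b_\ell$ concentrates at a common value with $O(K^{-1/2})$ relative error for all rows, since it is dominated by the off-diagonal sum of $K-1$ bounded terms. I would then feed this into a one-step contraction bound that is independent of the iteration number.
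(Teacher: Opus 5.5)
Your proposal is correct and follows essentially the same route as the paper. The shared ingredients are: the size-biased occupancy law ($1+Z$ for calibration rows against $Z$ for the target), the phase condition pinning the cutoff at count level $r$, a dimension-free Hilbert-metric contraction combined with one residual evaluated at the label-determined proxy $m_h\propto c_h+a$ to control every iteration and the exact limit, the band gap for the closed form, and the common $(L,Z,V)$ coupling $W_{\rm cal}=W_{\rm test}+e^{\eta V}$ for the mixed case. Two points need care, and only $o_p(1)$ accuracy is required. First, prove the row-normalizer concentration at that proxy, which depends only on the labels so Hoeffding applies conditionally, rather than at the noise-dependent iterates. Second, expect a rate with $\log K$ factors from $\max_h c_h=O_p(\log K)$ rather than $O(K^{-1/2})$.
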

At $\lambda=6$ and nominal $90\%$, the identical-score limit is $84.8796\%$. The continuous parameters $(\mu,\eta,a)=(.015,.002,1)$ give $79.1749\%$ at nominal $90\%$ and $87.2734\%$ at nominal $95\%$. They are different constructions, not alternative estimates of one population.

\subsection{One deterministic contraction controls all iterations}\label{app:convergence}

For positive vectors let $H(x,y)=\max_h\log(x_h/y_h)-\min_h\log(x_h/y_h)$. Work modulo positive rescaling, equivalently modulo constant shifts of logarithms. Let $G\in\R^{P\times K}$ have positive entries, let $d>0$, and put
\[
\Phi_d(b)=\frac{d}{G^\top[1/(Gb)]},\qquad
b_1=\frac{d}{G^\top\ones},\quad b_{t+1}=\Phi_d(b_t),\quad
p^{(t)}_{ih}=\frac{G_{ih}b_{t,h}}{(Gb_t)_i}.
\]
All divisions are coordinatewise; rescaling any $b_t$ is immaterial. Set $g_-=\min G_{ih}$, $g_+=\max G_{ih}$, $\kappa=1-g_-/g_+$, and $q=\kappa^2<1$.

\begin{lemma}[Contraction with a dimension independent coefficient]\label{lem:contraction}
$H(\Phi_d(b),\Phi_d(c))\leq qH(b,c)$. The map has a unique projective fixed point $b_\infty$, every positive initialization converges projectively to it, and the corresponding probabilities have column sums $Pd_h/\sum_kd_k$.
\end{lemma}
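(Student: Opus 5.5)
The plan is the classical Birkhoff–Hopf route: write $\Phi_d$ as a composition of positive linear maps and coordinatewise reciprocals, and track the Hilbert projective metric $H$ through each piece. Reciprocation $x\mapsto 1/x$ and multiplication by the fixed vector $d$ are $H$-isometries. Indeed, $\log(1/x_h)-\log(1/y_h)=-(\log x_h-\log y_h)$, so the max–min oscillation is unchanged, and multiplying by $d$ shifts every coordinate log-ratio by the same $\log d_h$ on both sides. Hence
\[
H(\Phi_d(b),\Phi_d(c))=H\bigl(G^\top[1/(Gb)],\,G^\top[1/(Gc)]\bigr),
\]
and it suffices to show that each of $x\mapsto Gx$ and $y\mapsto G^\top y$ contracts $H$ by the factor $\kappa=1-g_-/g_+$. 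Composing the two contractions, with the isometric reciprocal in between, gives $q=\kappa^2$.

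\textbf{Contraction of one positive matrix.} Rather than invoking Birkhoff's $\tanh(\Delta/4)$ bound, which would give a different (better) constant, I will prove the stated coefficient directly. Fix positive $x,y$ and put $m=\min_k x_k/y_k$ and $M=\max_k x_k/y_k$, so that $H(x,y)=\log(M/m)$. Then
\[
\frac{(Gx)_i}{(Gy)_i}=\sum_k w_{ik}\frac{x_k}{y_k},\qquad w_{ik}=\frac{G_{ik}y_k}{\sum_l G_{il}y_l},
\]
a convex combination of the ratios $x_k/y_k$. Since $G_{ik}\ge (g_-/g_+)G_{i'k}$ for all rows $i,i'$, any two weight rows satisfy $w_{ik}\ge \theta\, w_{i'k}\cdot(\text{ratio of denominators})$ with $\theta=g_-/g_+$. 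A cleaner route is a Doeblin-type minorization: each row weight vector dominates $\theta$ times a common probability vector $\nu_k\propto y_k$, since $w_{ik}\ge g_-y_k/(g_+\sum_l y_l)$. Then every output ratio has the form $\theta\langle\nu,x/y\rangle+(1-\theta)r_i$ with $r_i\in[m,M]$.

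\textbf{Main obstacle: converting the additive contraction into a log-ratio bound.} The minorization gives an oscillation bound on the ratios themselves, $\max_i-\min_i\le(1-\theta)(M-m)$. But $H$ is the log-oscillation, and these are not the same. I expect to handle this by passing to log-coordinates and using concavity of $\log$. The plan is to first treat the infinitesimal case, showing that the derivative of $\log(Gx)$ in $\log x$ is a row-stochastic matrix whose rows overlap by at least $\theta$. Such a matrix has Dobrushin coefficient at most $1-\theta=\kappa$, so it contracts oscillation by $\kappa$. Integrating along the straight line between $\log x$ and $\log y$ then gives $H(Gx,Gy)\le\kappa H(x,y)$. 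The same argument applies to $G^\top$, because its entries have the same bounds $g_-$ and $g_+$.

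\textbf{Fixed point, convergence and column sums.} For the remaining assertions, the quotient of the positive orthant by scaling is complete under $H$, since it is isometric to $\R^K$ modulo constants with the oscillation seminorm. Banach's fixed point theorem then gives a unique projective fixed point $b_\infty$, and geometric convergence of $b_t$ from any positive initialization. At the fixed point, $b_\infty=\lambda\, d/(G^\top[1/(Gb_\infty)])$ for some scalar $\lambda>0$, so the column sums are
\[
\sum_i p_{ih}=b_{\infty,h}\bigl(G^\top[1/(Gb_\infty)]\bigr)_h=\lambda d_h.
\]
Summing over $h$ gives total mass $P$ (each row of $p$ sums to one), which forces $\lambda=P/\sum_k d_k$ and yields the stated column sums.
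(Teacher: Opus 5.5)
Your proposal is correct and follows the paper's proof essentially step for step. The shared steps are: reciprocals and multiplication by $d$ are $H$-isometries; the log-coordinate derivative of each of $G$ and $G^\top$ is row-stochastic and minorized by $\theta=g_-/g_+$ times a common probability vector, so it contracts oscillation by $\kappa$; integrating along the log segment gives the bound, composing the two matrix maps gives $q=\kappa^2$, Banach's theorem applies on the complete projective quotient, and the column-sum identity follows at the projective fixed point. Your finite-difference Doeblin detour and the appeal to concavity of $\log$ are unnecessary, since the infinitesimal argument alone gives the stated coefficient, as in the paper.
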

\begin{proof}
For a positive matrix $A$ with entries in $[g_-,g_+]$, the derivative of $z\mapsto\log(Ae^z)$ has stochastic rows
\[
W_{ij}(z)=\frac{A_{ij}e^{z_j}}{\sum_k A_{ik}e^{z_k}}
\ \geq\ \frac{g_-}{g_+}\frac{e^{z_j}}{\sum_ke^{z_k}}.
\]
Every row therefore contains the same probability component of mass $g_-/g_+$. Subtracting this component shows $\operatorname{osc}(Wv)\leq\kappa\operatorname{osc}(v)$. Integrating the derivative along a log-coordinate segment gives $H(Ax,Ay)\leq\kappa H(x,y)$. Reciprocals preserve $H$, and multiplication by the same positive $d$ does not change it. The two positive matrix maps in $\Phi_d$ yield $q=\kappa^2$. The finite dimensional quotient space is complete, so the contraction theorem gives the fixed point and convergence. At a projective fixed point, $\Phi_d(b)=cb$. Writing $s_h=\sum_i p_{ih}$ yields $s_h=d_h/c$. Summing gives $c=(\sum_hd_h)/P$, establishing the prescribed column sums. This is a self-contained version of a classical matrix scaling argument \citep{knight2008}.
\end{proof}

\begin{proposition}[One residual controls every iteration]\label{prop:uniformenvelope}
Let $m=d/\sum_hd_h$, $\bar p_{ih}=G_{ih}m_h/(Gm)_i$, $a_1=H(b_1,m)$, and $r_m=H(\Phi_d(m),m)$. For every $t\geq1$,
\[
H(b_t,m)\leq q^{t-1}a_1+\frac{1-q^{t-1}}{1-q}r_m,
\qquad H(b_\infty,m)\leq \frac{r_m}{1-q}.
\]
Consequently,
\begin{equation}
\sup_{t\in\mathbb N\cup\{\infty\}}\max_{i,h}
\left|\log\frac{p^{(t)}_{ih}}{\bar p_{ih}}\right|
\leq B_G(m):=\max\left\{a_1,\frac{r_m}{1-q}\right\}.
\label{eq:uniformfull}
\end{equation}
\end{proposition}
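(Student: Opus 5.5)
The plan is to use Lemma~\ref{lem:contraction} with the triangle inequality for the projective Hilbert-type metric $H$, taking $m$ as an approximate fixed point. First I would check that $H$ satisfies the triangle inequality modulo rescaling. This holds because $H(x,y)=\operatorname{osc}(\log x-\log y)$ is a seminorm of the log-difference and vanishes exactly on proportional vectors. Then for $t\geq1$,
\[
H(b_{t+1},m)\leq H(\Phi_d(b_t),\Phi_d(m))+H(\Phi_d(m),m)\leq qH(b_t,m)+r_m .
\]
Iterating this one-step recursion from $H(b_1,m)=a_1$ gives the geometric bound $q^{t-1}a_1+r_m\sum_{s=0}^{t-2}q^s$, which is the displayed first inequality.

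The fixed-point bound comes next. Lemma~\ref{lem:contraction} gives projective convergence $b_t\to b_\infty$, and $H(\cdot,m)$ is continuous in the projective topology, so we may pass to the limit. Alternatively, apply the same triangle step to the fixed point directly: $H(b_\infty,m)\leq qH(b_\infty,m)+r_m$, and rearrange. Since the finite-$t$ bound is a convex combination of $a_1$ and $r_m/(1-q)$, we get $\sup_t H(b_t,m)\leq\max\{a_1,r_m/(1-q)\}$ for all $t\in\mathbb N\cup\{\infty\}$.

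The remaining step, which I expect to be the only delicate point, transfers the bound from weights to probabilities. We need $|\log p^{(t)}_{ih}-\log\bar p_{ih}|\leq H(b_t,m)$ with no extra factor of two. Write $z=\log b_t-\log m$, normalized by rescaling $b_t$ so that $\min_h z_h=0$ and $\max_h z_h=H(b_t,m)=:H$. This is legitimate because $p^{(t)}$ is invariant to rescaling $b_t$. Then
\[
\log\frac{p^{(t)}_{ih}}{\bar p_{ih}}=z_h-\log\frac{(Gb_t)_i}{(Gm)_i}.
\]
The last term is a log of a $\bar p_{i\cdot}$-weighted average of $e^{z_k}$, so it lies in $[0,H]$. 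Since $z_h$ also lies in $[0,H]$, the difference lies in $[-H,H]$. Taking the maximum over $i,h$ and the supremum over $t$ yields \eqref{eq:uniformfull}.

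If the weighted-average argument needed care at $t=\infty$, I would apply it to the representative $b_\infty$ normalized the same way. Only positivity of $G$ and $m$ is used.
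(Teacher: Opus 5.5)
Your proposal is correct and follows the paper's own proof essentially step for step. Both arguments combine the contraction of Lemma~\ref{lem:contraction} with the triangle inequality to get $H(b_{t+1},m)\leq qH(b_t,m)+r_m$, sum the geometric recursion, obtain the fixed-point bound directly or as a limit, and bound the probability log ratio by noting that $\log\{(Gb_t)_i/(Gm)_i\}$ is the log of a $\bar p_{i\cdot}$-weighted average of $b_t/m$. Your convex-combination remark spells out the step from the recursion to $\max\{a_1,r_m/(1-q)\}$, which the paper leaves implicit.
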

\begin{proof}
The contraction and the triangle inequality give $H(b_{t+1},m)\leq qH(b_t,m)+r_m$. Sum the geometric recursion. For a row normalized probability, its log ratio is a coordinate of $\log(b_t/m)$ minus the logarithm of a weighted average of $b_t/m$. The latter lies between the extreme coordinate log ratios. Hence its absolute value is at most $H(b_t,m)$. The fixed point version follows either directly or by taking the limit. No probabilistic assumption or finite iteration cap enters this envelope.
\end{proof}

\subsection{Occupancy and quantile convergence}
\label{app:occupancy}
We first prove the identical-score part of Proposition~\ref{thm:flat}. Write $c=(c_1,\ldots,c_K)\sim\operatorname{Multinomial}(n;1/K,\ldots,1/K)$, with $n/K\to\lambda>0$. For every fixed integer $t\geq0$, define
\[
H_K(t)=\frac1K\sum_{h=1}^K\ind\{c_h\leq t\},\qquad
B_K(t)=\frac1n\sum_{h=1}^Kc_h\ind\{c_h\leq t\}.
\]
The marginal binomial count converges to $Z\sim\Poi(\lambda)$. Any fixed pair of different class counts converges jointly to two independent copies of $Z$. The latter follows, for example, by taking the limit of the multinomial probability generating function
$[1+(s-1)/K+(t-1)/K]^n\to\exp\{\lambda(s-1)+\lambda(t-1)\}$.
The variance of the average of any bounded function of the counts is its single coordinate variance divided by $K$ plus $(K-1)/K$ times the two coordinate covariance. The first term vanishes and the second converges to zero. Applying this to indicators and to the bounded function $j\ind\{j\leq t\}$ gives
\begin{equation}
H_K(t)\xrightarrow{p}F_\lambda(t),\qquad
B_K(t)\xrightarrow{p}\lambda^{-1}\E[Z\ind\{Z\leq t\}]=F_\lambda(t-1).
\label{eq:occconv}
\end{equation}
The identity uses $j\Prob(Z=j)=\lambda\Prob(Z=j-1)$.

For the empirical prior score, sort class counts in decreasing order and replicate each count according to its calibration multiplicity. Let $R_K$ be the count at rank $k_\alpha$ in this list. Equivalently, at the ascending lower end, fewer than $n-k_\alpha+1$ observations have count less than $R_K$, while at least that many have count at most $R_K$. Since $(n-k_\alpha)/n\to\alpha$, the strict gaps in \eqref{eq:phase} and \eqref{eq:occconv} imply $\Prob(R_K=r)\to1$. Conditional on $c$, an independently uniform target label is accepted with probability $K^{-1}\sum_h\ind\{c_h\geq R_K\}$. This random fraction converges in probability to $1-F_\lambda(r-1)$. Bounded convergence in probability implies convergence of its expectation. A finite uniform pseudocount, even one varying with $(n,K)$, never changes these exact real arithmetic ranks.

For augmentation, let the hypothetical true target class be $h$. Only the scores of its $c_h$ calibration observations change count level; all other calibration count levels remain unchanged. A union bound and the binomial tail bound imply $\max_hc_h=O_p(\log K)$. Thus the fraction of affected calibration observations is uniformly $o_p(1)$ over $h$. The strict quantile gaps therefore keep the augmented calibration cutoff at $r$ with probability tending to one, uniformly over candidates. The candidate count is now $c_h+1$, so the limiting acceptance event is $Z\geq r-1$.

For the smoothed full conformal rule, let $G_h$ and $E_h$ be the numbers of calibration nonconformities respectively strictly larger than and equal to the target nonconformity under its augmented fit. With an independent $V_h\sim\operatorname{Unif}(0,1)$, set
\begin{equation}
p_h=\frac{G_h+V_h(E_h+1)}{n+1},\qquad C=\{h:p_h>\alpha\}.
\label{eq:smoothed}
\end{equation}
At the true label, exchangeable rank blocks partition $[0,1]$ into intervals of lengths proportional to their multiplicities. Uniform randomization within the selected block makes $p_{Y_*}$ uniform, including ties. Coverage is exactly $1-\alpha$. This is the established smoothed conformal construction.

\subsection{Continuous informative scores}
\label{app:continuous}
We prove Proposition~\ref{thm:continuous} for fixed $\mu\in\R$, $\eta>0$, $a>0$, arbitrary $T_K\in\mathbb N\cup\{\infty\}$, and $n/K\to\lambda\in(0,\infty)$. Perfect original classification additionally requires $\mu>2\eta$. There is one target, so $P=n+1$. For clarity, all limits concern exact real arithmetic.

\paragraph{Uniform approximation of all cycles and their exact limit}
Let $D_h$ be the count of class $h$ in all $P$ labels. With probability tending to one, $\max_h D_h\leq C\log K$ for a fixed sufficiently large $C$. Indeed, for $t>eP/K$, a binomial upper tail is bounded by $(eP/(Kt))^t$; multiply by $K$ and take $t=C\log K$. On this event,
\[
\max_h m_h=O(\log K/K),\qquad \sum_hm_h^2\leq\max_hm_h.
\]
Write $V_{ih}=e^{\eta U_{ih}}$, so $G_{ih}=V_{ih}e^{\mu\ind\{Y_i=h\}}$, and $b_\eta=\E V_{ih}$. Conditional on all labels, the $V_{ih}$ are independent and bounded above and below by positive constants. Each column mean differs from $P^{-1}\sum_i V_{ih}$ by at most a fixed constant times $D_h/P$. Bounded variable concentration and a union bound over $K$ columns give
\begin{equation}
\max_h\left|P^{-1}\sum_iG_{ih}-b_\eta\right|=o_p(1).
\label{eq:columnconc}
\end{equation}
Conditional on labels, each weighted row sum has expectation
$b_\eta[1+(e^\mu-1)m_{Y_i}]$ and sum of squared bounded weights at most a constant times $\sum_hm_h^2$. Hoeffding's inequality and a union bound over $P=O(K)$ rows give
\begin{equation}
\max_i\left|\sum_hm_hG_{ih}-b_\eta\right|=o_p(1).
\label{eq:rowconc}
\end{equation}
For instance, deviations of order $(\log K)^{3/2}/\sqrt K$ have a vanishing union probability on the displayed occupancy event. Conditioning is legitimate: counts depend on labels, not on the independent $U$ arrays.

The concentration statements imply that the initial projective distance $H(b_1,m)$ and residual $H(\Phi_d(m),m)$ are both $o_p(1)$. The contraction coefficient is bounded by the fixed $q_0<1$ from Appendix~\ref{app:convergence}. Proposition~\ref{prop:uniformenvelope} therefore bounds the probability log approximation uniformly over \emph{all} $t\geq1$ and $t=\infty$, proving \eqref{eq:uniformapprox}. This replaces a fixed-length induction. The fully explicit residual argument and finite sample error envelope appear in Appendix~\ref{app:convergence}.

\paragraph{The two limiting score distributions}
Multiply every true probability by the same deterministic number $(n+Ka)b_\eta e^{-\mu}$. By the preceding approximation the result is
\[
(c_{Y_i}+a)e^{\eta U_{i,Y_i}}(1+o_p(1))
\]
uniformly, for calibration and target. Let $V\sim\operatorname{Unif}[-1,1]$ be independent of $Z\sim\Poi(\lambda)$. The calibration empirical distribution converges, at every fixed probability level argument, to the law
\begin{equation}
W_{\mathrm{cal}}=(Z+1+a)e^{\eta V}.
\label{eq:calcontinuous}
\end{equation}
To see this directly, condition on labels. The selected variables $U_{i,Y_i}$ are independent uniforms. The conditional variance of the empirical CDF is at most $1/(4n)$. Its conditional expectation is a count weighted average of the corresponding uniform CDFs. For a fixed argument only finitely many count levels contribute, because $e^{\eta V}$ is bounded below. Equation~\eqref{eq:occconv} therefore identifies its limit. The uniform relative approximation does not change the limit, which has a continuous CDF.

For the target, $c_{Y_*}$ has the marginal binomial count law, is independent of its own selected uniform, and converges to $Z$. Thus its rescaled true probability converges in distribution to
\begin{equation}
W_{\mathrm{test}}=(Z+a)e^{\eta V}.
\label{eq:testcontinuous}
\end{equation}
The target and the empirical calibration CDF need not be independent. Since the latter converges to a deterministic distribution, convergence of its quantile and continuity of the target limit suffice for the coverage calculation.

Under \eqref{eq:bandgap}, the entire count $r$ band
$[(r+a)e^{-\eta},(r+a)e^\eta]$ is separated from every lower and higher count band. By \eqref{eq:phase}, the $\alpha$ quantile of \eqref{eq:calcontinuous} lies strictly inside that band. Define
\[
\rho=\frac{\alpha-F_\lambda(r-2)}{\Prob(Z=r-1)}\in(0,1),\qquad
q_\alpha=(r+a)\exp\{\eta(2\rho-1)\}.
\]
This is the unique $\alpha$ quantile. The ascending calibration probability threshold in \eqref{eq:rank} has rank $n-k_\alpha+1$, so it converges in probability to $q_\alpha$. The target is rejected below this threshold. Its limiting rejection probability is
\[
F_\lambda(r-1)+\rho\Prob(Z=r)
=F_\lambda(r-1)+\frac\lambda r\bigl[\alpha-F_\lambda(r-2)\bigr].
\]
Taking the complement proves \eqref{eq:continuouslimit}. The numerical values in the theorem follow at $r=4$ and $r=3$. They are limiting values, not confidence bounds for the finite experiment.

\paragraph{What changes when there is no continuous variation}
At $\eta=0$, the calibration probability quantile is an atom. Nonstrict inclusion accepts the entire target count $r$ band, giving the identical-score case of Proposition~\ref{thm:flat} rather than \eqref{eq:continuouslimit}. Thus the order of the limits matters: taking $K\to\infty$ for any fixed sufficiently small positive $\eta$ is not equivalent to setting $\eta=0$ first. The continuous formula does not require a large noise magnitude; it requires a nondegenerate ordering within the limiting band. This distinction also explains why comparing LAC and NLL after finite precision tie merging can obscure the mechanism.

\subsection{Uniform count bias at all stopping times and at convergence}
Under the continuous model, $G_{ih}=\exp(\eta U_{ih}+\mu\ind\{Y_i=h\})$, and therefore
\[
q\leq q_0:=\bigl(1-e^{-(|\mu|+2\eta)}\bigr)^2<1.
\]
For fixed $\mu,\eta$, this coefficient is bounded away from one independently of $n,K$, though it can be conservative. The column and weighted row concentration statements in Appendix~\ref{app:continuous} imply
\[
a_1=\operatorname{osc}\log(G^\top\ones)=o_p(1),\qquad
r_m=\operatorname{osc}\log\{G^\top[1/(Gm)]\}=o_p(1).
\]
Indeed $(Gm)_i=b_\eta(1+o_p(1))$ uniformly, and each column mean is $b_\eta(1+o_p(1))$. Substituting these into the second expression gives a common factor $P$ times $1+o_p(1)$. Proposition~\ref{prop:uniformenvelope} then gives
\begin{equation}
\sup_{t\in\mathbb N\cup\{\infty\}}\max_{i,h}
\left|\log\frac{p^{(t)}_{ih}}{m_hG_{ih}/b_\eta}\right|=o_p(1).
\label{eq:uniformapprox}
\end{equation}
This bound controls an unbounded number of iterations through one residual.

After multiplying probabilities by $(n+Ka)b_\eta e^{-\mu}$, the same proxy scores $(c_{Y_i}+a)e^{\eta U_{i,Y_i}}$ approximate \emph{every} iteration and the exact limit. Their empirical calibration distribution and target distribution converge as proved in Appendices~\ref{app:continuous} and~\ref{app:mixed}. At a unique continuous calibration quantile, uniform log-score approximation also gives uniform quantile approximation. Outside an event on which the target proxy is arbitrarily close to that limiting quantile, every iteration has the same acceptance decision as the proxy. The probability of this exceptional event tends to zero by continuity of the target limit.

It follows that all continuous cases of Proposition~\ref{thm:continuous} hold for \emph{every measurable} $T_K\in\mathbb N\cup\{\infty\}$, including stopping rules that depend on the pooled features and calibration labels. No deterministic upper bound on $T_K$ is required. This does not justify arbitrary stopping in the \emph{valid augmented procedure}: its ideal score definition must still be permutation equivariant, or its numerical output must be a certified outer set of that ideal procedure. The exact converged positive-kernel fit is equivariant because it is unique. The result fixes the model parameters and assumes bounded class intensities; it does not assert uniformity for growing score contrast, arbitrary kernels, or changing catalogs outside the stated triangular array.

\subsection{Mixed Poisson extension}
\label{app:mixed}
We prove the unequal-frequency part of the complementary proposition. The empirical quantile cannot be inferred solely from marginal self-count dominance.

For the transport specialization, the arrays of class probabilities are deterministic. Write $\lambda_K=n/K$, $\lambda_{K,h}=n\pi_{K,h}$, and assume $0\leq\lambda_{K,h}\leq L_{\max}<\infty$. Weak convergence of the empirical intensity laws on this compact interval implies convergence of their means to $\lambda$, so $H^*(d\ell)=\ell H(d\ell)/\lambda$ is a probability law. Classes of zero probability have zero weight in $H^*$; a positive pseudocount keeps their transport columns defined.

\paragraph{Normalization still concentrates.}
Each full sample class count is binomial with mean at most $L_{\max}(1+1/n)$. The same tail bound and union argument as Appendix~\ref{app:continuous} give $\max_hD_h=O_p(\log K)$. Consequently $\max_hm_h=O_p(\log K/K)$. Conditional on the labels, every noise variable remains independent. Column means still differ from their noise only means by at most a fixed constant times $\max_hD_h/(n+1)$. Weighted row means differ from $b_\eta$ by at most a constant times $\max_hm_h$, and their squared weight sums vanish. Thus \eqref{eq:columnconc}, \eqref{eq:rowconc}, and the uniform contraction argument of Appendix~\ref{app:convergence} hold unchanged for all iterations and the exact limit. The rescaled true probabilities have uniform relative approximation $(c_{Y_i}+a)e^{\eta U_{i,Y_i}}$.

\paragraph{Calibration distribution.}
For fixed $x>0$, specialize the generic notation to
\[
g_x(j)=\Prob\{(j+a)e^{\eta V}\leq x\}.
\]
This function vanishes when $j>x e^\eta-a$, so $j g_x(j)$ is bounded for this fixed $x$. Conditional on the labels, the independent noise variables give calibration empirical CDF variance at most $1/(4n)$ about
\[
T_K(x)=\frac1n\sum_h c_h g_x(c_h).
\]
Changing one calibration label changes at most two counts, hence changes $T_K(x)$ by at most a fixed $x$ dependent constant divided by $n$. The independent label bounded differences variance bound gives $\operatorname{Var}(T_K(x))=O(1/n)$. Its expectation equals
\[
\sum_h\pi_{K,h}\E\left[g_x\{1+\operatorname{Bin}(n-1,\pi_{K,h})\}\right].
\]
For completeness, a binomial count of $N$ Bernoulli variables with success probability $p$ can be coupled to a sum of $N$ independent $\Poi(p)$ variables with disagreement at most $Np^2$. This follows by coupling one Bernoulli to $\ind\{\Poi(p)>0\}$, then charging both the difference $p-(1-e^{-p})\leq p^2/2$ and the probability of at least two Poisson events, also at most $p^2/2$. Changing Poisson mean $(n-1)p$ to $np$ adds error at most $p$. The intensity bound therefore makes the binomial to $\Poi(\lambda_{K,h})$ error uniformly $O(1/n)$.

It follows that the expectation converges to
\[
\frac1\lambda\int \ell\,\E[g_x\{1+\Poi(\ell)\}]\,H(d\ell),
\]
because the integrand is bounded and continuous on $[0,L_{\max}]$. This is exactly $F_{\rm cal}(x)$ in Proposition~\ref{thm:mixed}. Combining conditional noise concentration and label concentration proves empirical CDF convergence at each $x$. The uniform relative score approximation leaves this limit unchanged because $F_{\rm cal}$ is continuous.

\paragraph{Target distribution and the quantile.}
An independent target chooses class $h$ with probability $\pi_{K,h}$. Conditional on that choice, its calibration count is $\operatorname{Bin}(n,\pi_{K,h})$, and its selected noise variable is independent. The same uniform coupling and empirical intensity convergence give target limit $F_{\rm test}$. Uniqueness of the continuous calibration $\alpha$ quantile implies convergence of the ascending probability threshold at rank $n-k_\alpha+1$ to $q_\alpha$. Independence between that threshold and the target is unnecessary: convergence of the former to a constant and continuity of the latter's limiting law determine the limiting acceptance probability. This proves \eqref{eq:mixedlimit}.

Under the common $L,Z,V$ coupling, $W_{\rm cal}=W_{\rm test}+e^{\eta V}$. Therefore $F_{\rm test}(x)\geq F_{\rm cal}(x)$ for every $x$. Their difference at $q_\alpha$ is exactly the crossing probability, with endpoints irrelevant by continuity. This proves both the nonpositive coverage error and its stated strictness condition. The result needs neither equal class frequencies nor disjoint bands. It does require bounded intensities and a unique quantile; no claim is made about arbitrary unbounded intensity limits or quantiles on a flat CDF interval.

\section{Rank erosion, finite-cycle scope, and numerical certificates}\label{app:numerics}
\subsection{One cycle erosion and linear augmentation}\label{app:erosion}
For the empirical prior $d_h=c_h+a$, the general update (\ref{eq:generalupdate}) has $\rho_h=1+1/d_h$. For a candidate with original probability $t=p_*(h)$, an other-label calibration row has $v_i=p_i(Y_i)$ and $u_i=p_i(h)$. It ceases to be strictly better exactly when
\[
1<v_i/t\leq\frac{(d_h+u_i)(d_h+1)}{d_h(d_h+t)}\leq(1+1/d_h)^2.
\]
If $b_h$ comparisons were originally better and $r_h$ cross this window, the new count is $b_h-r_h$. An excluded candidate enters precisely when $r_h\geq b_h-k_\alpha+1$; same-label comparisons never cross.

For a labeled exchangeable bag of $P=n+1$ observations, the true augmented counts are the same for every target role. Let $A_i$ and $E_i$ denote augmented and ordinary inclusion for role $i$. One cycle gives $E_i\leq A_i$, and consequently
\[
\frac1P\sum_i E_i=\frac1P\sum_i A_i-\frac1P\sum_i A_i(1-E_i),\qquad
\Pr\{Y_*\in C_E\}=\mathbb E B_\alpha-\mathbb E L_\alpha.
\]
Sorting the common augmented true score vector gives $B_\alpha\geq k_\alpha/P$, with equality if scores are distinct. Exchangeability identifies a designated target's expected coverage with the average over roles. The roles are dependent, so the bag is the statistical unit.

For a one-cycle kernel, put $A_{ih}=G_{ih}/\sum_jG_{jh}$ and $t_i=\sum_h(c_h+a)A_{ih}$. Then
\begin{equation}\label{eq:closedform}
p_i^{(h)}(j)=\frac{(c_j+a+\ind\{j=h\})A_{ij}}{t_i+A_{ih}}.
\end{equation}
The $nK$ calibration-hypothesis true scores and $MK$ candidate scores cost $O((n+M)K)$ to construct; fixed-level quantiles use linear selection. This is not a one-cycle formula for a converged fit.

\subsection{A three cycle counterexample to nesting}\label{app:threecycle}
Take calibration labels $(2,2,2,2,2,1,1,2,0)$, $a=1$, $\alpha=.3$, and the last row as the target in
\[
G=\begin{pmatrix}
836&229&485\\453&947&493\\407&127&863\\78&505&199\\799&464&922\\
117&116&821\\697&617&819\\982&761&854\\637&441&425\\504&998&69
\end{pmatrix}.
\]
Three complete rational normalization cycles give empirical set $\{0,1\}$ and full augmented set $\{1\}$. Every entry is positive. This exact witness rules out extending one-cycle nesting to repeated transport. It does not contradict the converged augmented rank guarantee.

\subsection{Prior sensitivity and a local limiting-score certificate}\label{app:newsensitivity}
For the positive-kernel map $\Phi_d$ above, let $\Delta=H(d',d)$ and $q<1$ be its contraction coefficient. The two prior sequences obey $H(b'_{t+1},b_{t+1})\leq\Delta+qH(b'_t,b_t)$, initially $H(b'_1,b_1)=\Delta$. Thus
\begin{equation}\label{eq:sensitivity}
\max_{i,h}|\log p_i'^{(T)}(h)-\log p_i^{(T)}(h)|\leq\theta_T\Delta,\quad
\theta_T=\frac{1-q^T}{1-q},\quad\theta_\infty=\frac1{1-q}.
\end{equation}
Row normalization converts projective distance into the probability bound. Order statistics are 1-Lipschitz in the maximum norm. For candidate augmentation, $\Delta=\log(1+1/d_h)$, so a query/threshold margin exceeding $2\theta_T\Delta$ resolves the corresponding reference comparison. The earlier timings used the weaker $T\Delta$ bound and are not reassigned to the stronger estimate. This specializes established stability screening \citep{ndiaye2022}.

\subsection{A local certificate for the exact converged score}
For any positive proposed multiplier $b$, write $p_{ih}=G_{ih}b_h/(Gb)_i$, $s_h=\sum_ip_{ih}$, and
\[
J_{hk}=\frac{\sum_i p_{ih}p_{ik}}{s_h},\qquad
\zeta=\sum_k\min_hJ_{hk}>0,\qquad
\delta_b=\operatorname{osc}_h\log(d_h/s_h).
\]
The rows of $J$ sum to one. It is the derivative of $z\mapsto\log\Phi_d(e^z)$ at $z=\log b$. Any certified lower bound $\underline\zeta\leq\zeta$ and upper bound $\overline\delta\geq\delta_b$ may replace the exact values below.

\begin{proposition}[Residual to converged rank certificate]\label{prop:localcertappendix}
If $r=2\overline\delta/\underline\zeta\leq1/6$, then
\[
H(b,b_\infty)\leq r,\qquad
\max_{i,h}|\log p_{ih}-\log p^{(\infty)}_{ih}|\leq r.
\]
Candidate and calibration score intervals obtained from this bound give inner and outer prediction sets. Including every unresolved candidate in the outer set preserves the ideal full-conformal coverage guarantee under the reference rank argument in Section~\ref{sec:setup}'s sampling and equivariance assumptions, even when the computation is stopped early.
\end{proposition}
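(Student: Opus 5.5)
Write $b_\infty$ for the projective fixed point of $\Phi_d$ from Lemma~\ref{lem:contraction}. The plan is a Newton--Kantorovich-type argument in log coordinates. Set $z=\log b$ and define $F(z)=\log\Phi_d(e^z)-z$ modulo constants. Fixed points of $\Phi_d$ are zeros of $F$ in the quotient seminorm $\operatorname{osc}$.

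\textbf{Step 1: the residual.} First I would check that $\Phi_d(b)_h=d_h b_h/s_h$. This holds because $\sum_i G_{ih}/(Gb)_i=s_h/b_h$. Hence $F(z)_h=\log(d_h/s_h)$, so $\operatorname{osc}F(z)=\delta_b\le\overline\delta$.

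\textbf{Step 2: the derivative.} Differentiating $\log s_h$ gives $\partial\log s_h/\partial z_k=\delta_{hk}-J_{hk}$. Therefore $DF=-(I-J)$, and $D(\log\Phi_d)=J$. This matches the statement that $J$ is the derivative.

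\textbf{Step 3: invertibility on the quotient.} Since $J$ is row-stochastic with Doeblin mass $\zeta$, the same subtraction argument as in Lemma~\ref{lem:contraction} gives $\operatorname{osc}(Jv)\le(1-\zeta)\operatorname{osc}(v)$. Consequently $I-J$ is invertible on the quotient space, with $\operatorname{osc}((I-J)^{-1}w)\le\operatorname{osc}(w)/\zeta$. The Newton step therefore has size at most $\overline\delta/\underline\zeta=r/2$.

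\textbf{Step 4: the main obstacle, Lipschitz control of $J$ near $z$.} On the ball $\operatorname{osc}(y-z)\le r$, each $p_{ih}$ changes by a factor in $[e^{-2r},e^{2r}]$, because row normalization at most doubles the log change. Then $s_h$ changes by at most $e^{\pm2r}$ and $J_{hk}$ by at most $e^{\pm6r}$. This gives $\zeta(y)\ge e^{-6r}\zeta$, and a bound $\operatorname{osc}((J(y)-J(z))v)\le(e^{6r}-1)\operatorname{osc}(v)$ up to constants.

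\textbf{Step 5: conclude.} Consider the simplified Newton map $T(y)=y+(I-J(z))^{-1}F(y)$. Its derivative is $(I-J(z))^{-1}(J(y)-J(z))$, and Step 4 should make it a contraction with factor at most $1/2$ on the ball. The condition $r\le1/6$ is meant to make $e^{6r}-1\le e-1$ compatible with this, though the constants will need checking. Granted that, $T$ maps the ball into itself, since $\operatorname{osc}(T(y)-z)\le r/2+\tfrac12\operatorname{osc}(y-z)\le r$. The Banach fixed point theorem then yields a zero of $F$ in the ball. By uniqueness of the projective fixed point in Lemma~\ref{lem:contraction}, that zero is $b_\infty$, so $H(b,b_\infty)\le r$.

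If the factor-$1/2$ contraction fails with these constants, I would fall back to a one-dimensional majorant. Writing $u=H(b,b_\infty)$ along the segment from $z$ to $\log b_\infty$, one obtains $u\,\zeta e^{-6u}\le\overline\delta$ for $u\le r$. With $r\le1/6$ this forces $u\le2\overline\delta/\underline\zeta$, and a continuity argument excludes the larger root.

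\textbf{Step 6: consequences.} The probability bound follows as in Proposition~\ref{prop:uniformenvelope}. The row-normalized log ratio $\log p_{ih}-\log p^{(\infty)}_{ih}$ is a coordinate of $\log(b/b_\infty)$ minus a weighted average of such coordinates, so it lies within $H(b,b_\infty)\le r$.

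Score intervals then follow from $[e^{-r}p,e^{r}p]$. Monotonicity of the rank counts \eqref{eq:rank} in the scores gives inner and outer sets. The outer set contains the ideal augmented set, which is permutation equivariant because $b_\infty$ is unique. Therefore \eqref{eq:reference} transfers coverage, and this holds regardless of when the iteration was stopped.
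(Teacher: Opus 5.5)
Your Step 5 fails under the stated hypothesis, and the failure is structural rather than a matter of constants. The simplified Newton map has derivative $(I-J(z))^{-1}(J(y)-J(z))$, and your Steps 3--4 bound its $\operatorname{osc}$-norm only by $(e^{6r}-1)/\underline\zeta$. The hypothesis $r=2\overline\delta/\underline\zeta\leq1/6$ ties the residual to $\underline\zeta$, but it does not make $r$ small relative to $\underline\zeta$. With $\underline\zeta=10^{-2}$ and $r=0.1$ your bound is about $82$. Even at $\underline\zeta=1$, a factor $1/2$ would need $r\leq\log(3/2)/6\approx0.068$. Preconditioning with $(I-J(z))^{-1}$ multiplies the variation of $J$ by $1/\underline\zeta$, so no sharpening of Step 4 rescues this map. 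Your fallback is also incomplete, for three reasons. First, with the constant $e^{-6u}$ the inequality $ue^{-6u}\leq r/2$ is vacuous once $r>1/(3e)\approx0.123$, because the left side never exceeds $1/(6e)$. Second, on its lower branch it yields $u\leq r$ only when $e^{-6r}\geq1/2$. Third, the ``continuity argument'' that discards the upper branch is missing: $u=H(b,b_\infty)$ is a single number, with no parameter to vary.

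Two repairs are needed. First, Step 4 is loose. In the projective metric, the weighted-average argument you use in Step 6 shows each $p_{ih}$ moves by a factor in $[e^{-r},e^{r}]$, not $[e^{-2r},e^{2r}]$. Hence $J_{hk}(y)\geq e^{-3r}J_{hk}(z)$.

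Second, do not invert $I-J$. As the paper does, apply Banach's theorem directly to $\Psi(y)=\log\Phi_d(e^y)$ on the closed $\operatorname{osc}$-ball of radius $r$ about $z=\log b$. On this convex ball the rows of $D\Psi=J$ share a common component of mass at least $e^{-3r}\underline\zeta$, so $\Psi$ contracts with factor $1-e^{-3r}\underline\zeta$. It also maps the ball into itself: $\operatorname{osc}(\Psi(y)-z)\leq\overline\delta+(1-e^{-3r}\underline\zeta)r\leq r$, because $\overline\delta=\underline\zeta r/2\leq e^{-3r}\underline\zeta r$, i.e.\ $e^{-3r}\geq1-3r\geq1/2$. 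The factor $\underline\zeta$ cancels, which is exactly why the scale-free threshold $1/6$ suffices. Uniqueness from Lemma~\ref{lem:contraction} then identifies the fixed point with $b_\infty$.

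Your majorant route can also be completed, though it is longer. Use the sharper constant, noting $re^{-3r}>r/2$ and $r/2<1/(3e)$. Then add a homotopy $\log d_t=(1-t)\log s+t\log d$. Here $s$ depends only on $b$ and $G$, so $b$ is exactly fixed at $t=0$. The residual at $b$ is $t\delta_b$, $J(b)$ is unchanged, and $b_\infty(d_t)$ moves continuously by \eqref{eq:sensitivity}.

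Your Step 6 and the outer-set coverage argument are fine and agree with the paper.
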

\begin{proof}
On a projective ball of radius $r$ around $b$, every row probability changes multiplicatively by a factor in $[e^{-r},e^r]$. Each entry of the log-map derivative therefore obeys $J_{hk}(\widetilde b)\geq e^{-3r}J_{hk}(b)$. Its rows share a component of mass at least $e^{-3r}\underline\zeta$. Integrating the derivative on the convex log-coordinate ball gives local contraction coefficient at most $1-e^{-3r}\underline\zeta$. Since $r\leq1/6$, $e^{-3r}\geq1-3r\geq1/2$, and
\[
\overline\delta+(1-e^{-3r}\underline\zeta)r\leq r.
\]
Hence the map sends the closed ball to itself and is a contraction there. Its fixed point is the unique positive-kernel projective fixed point, proving the distance bound. Row normalization converts projective distance to the stated log-probability bound.

For candidate $h$, let $[\ell_i,u_i]$ enclose each calibration true probability and $[\ell_*,u_*]$ enclose its query probability in the \emph{same augmented fit}. The number of definitely better calibration probabilities is $B^-_h=\#\{i:\ell_i>u_*\}$, while $B^+_h=\#\{i:u_i>\ell_*\}$ bounds the number that could be better. Known identical feature rows with the same class have exact ties and are removed from both counts. The ideal candidate is certainly included when $B^+_h<k_\alpha$ and certainly excluded when $B^-_h\geq k_\alpha$. Otherwise retain it in the outer set. This set contains the exact converged full-conformal set pointwise, so its marginal coverage is at least that of the ideal set. This is a stability-based conformal outer-set construction, not a new exchangeability principle \citep{ndiaye2022}.
\end{proof}

The certificate concerns \emph{ranks at convergence}, not just a small marginal residual. A residual without a stability factor need not imply a small score error. Conversely, a strictly resolved rank can be certified before extremely accurate convergence of every matrix entry. The implementation retains unresolved cases rather than counting iteration-cap termination as convergence. Full prediction sets require candidate hypotheses; the large convergence study below evaluates only the true hypothesis for coverage and makes no new set-size claim.

\subsection{Outward arithmetic and the numerical contract}
The new experiment interprets the positive binary64 matrix $G=\operatorname{fl}(\exp(L-\max L))$ as the exact input kernel. It does not equate a rounded exponential to the exact real exponential. This feature-only map is permutation equivariant, and the mathematical theorems apply to every positive kernel. Range checks reject zero, subnormal intermediate products, and nonfinite arithmetic rather than silently changing the kernel.

Here is the explicit rounding contract. Let $u=2^{-53}$, $\gamma_m=mu/(1-mu)$, and assume positive dot products obey the standard binary64 relative error bound $\gamma_m$ without underflow or overflow. If $\widehat p$ is computed by a product, a length-$K$ dot product and a division, its exact value for the proposed binary64 $b$ obeys
\[
c_-\widehat p_{ih}\leq p_{ih}\leq c_+\widehat p_{ih},\quad
c_- =\frac{1-\gamma_K}{(1+u)^2},\quad
c_+ =\frac{1+\gamma_K}{(1-u)^2}.
\]
The code encloses these coefficients with directed adjacent floating values. A length-$P$ sum then gives $s_h^-=c_-\widehat s_h/(1+\gamma_P)$ and $s_h^+=c_+\widehat s_h/(1-\gamma_P)$, rounded outward. From the Gram product of $\widehat p$, a lower matrix bound is
\[
J^-_{hk}=\frac{c_-^2\,\operatorname{fl}(\widehat p^\top\widehat p)_{hk}}{(1+\gamma_P)s_h^+}.
\]
Outward sums of its columnwise minima give $\underline\zeta$. Set $R^+=\max_h(d_h/s_h^-)/\min_h(d_h/s_h^+)$. The inequality $\log R^+\leq R^+-1$ supplies $\overline\delta$ without any logarithm in the certifier. Bounds for the pointwise probability calculation are added to $r$; for a resulting log radius $\epsilon<1$, the interval $[(1-\epsilon)\widehat p,\widehat p/(1-\epsilon)]$ is conservative by $e^{-\epsilon}\geq1-\epsilon$ and $e^\epsilon\leq1/(1-\epsilon)$. Every elementary endpoint operation is rounded outward.

The mathematical enclosure assumes the stated binary64 arithmetic model and successful range checks. Separate 80-decimal scalar calculations check the declared subset of small positive-kernel instances. The preceding proof establishes the general real-arithmetic result.

\section{Empirical design, complete comparisons, and interpretation}\label{app:empirical}
\paragraph{Scope of inference.} The original feature extraction uses nine encoder/dataset pairs, with 45,999 encoded image/encoder pairs. A fixed hash excludes 256 temperature-fitting rows per dataset before every evaluation. Resampling from the remaining finite pool is iid conditional on that pool even when the original images are not independent draws from a world population. This does not estimate uncertainty over different datasets, geographic environments, or pretrained-model training corpora. Official test images were not used in this project's earlier development; their absence from encoder pretraining is not established.

\paragraph{Primary allocation.} The one-cycle and three-cycle bag study has $9\times2\times3\times2\times2\times3=648$ endpoints: cache, temperature, ratio, cycle count, nominal level, and empirical coverage/augmented coverage/net role loss. With independent bag summaries $V_r\in[a,b]$, unbiased sample variance $s^2$, and $R$ repetitions, the empirical Bernstein radius used is
\[
\sqrt{\frac{2s^2\log(4/\delta)}{R}}+\frac{7(b-a)\log(4/\delta)}{3(R-1)},\qquad \delta=.05/648.
\]
Intersect with the declared range. Coverage and one-cycle loss lie in $[0,1]$; net three-cycle loss lies in $[-1,1]$. The family is fixed, not selected after inspecting intervals. Shared observations across endpoints do not invalidate a union bound. Native and fitted temperatures remain distinct; a complete table of every endpoint is in \texttt{results/summary/all\_coverage\_conditions.csv}.

\paragraph{Later convergence allocation.} Each of 27 cells has 1,024 new target episodes from the already known pools. Two priors, four iteration choices, and two levels give 432 intervals with Clopper--Pearson tails \citep{clopper1934} of size $.05/(2\cdot432)$. Guaranteed inclusions determine a lower count; possible inclusions determine an upper count. An unresolved numerical rank would widen the interval. Here all 55,296 limit fits and all target ranks resolve. The tables below retain every exact-limit condition, including larger-budget nondetections.
\begin{table}[p]\centering\small
\caption{Every exact-limit condition at nominal 90\% in the post-confirmation sensitivity. Each cell uses $1,024$ independent target episodes. Coverage and simultaneous $95\%$ intervals are percentages; all intervals use the separate $432$-endpoint allocation and round outward. This table does not replace the original frozen one-cycle results.}
\begin{tabular}{llrll}\toprule Encoder & Data & $n/K$ & Empirical limit & Augmented limit\\\midrule
CLIP B/32 & CIFAR10 & 2 & 75.10 [69.60, 80.09] & 88.77 [84.52, 92.23]\\
CLIP B/32 & CIFAR10 & 6 & 85.64 [81.02, 89.55] & 90.82 [86.88, 93.94]\\
CLIP B/32 & CIFAR10 & 20 & 87.60 [83.20, 91.24] & 89.26 [85.08, 92.64]\\
CLIP B/32 & Food101 & 2 & 82.81 [77.89, 87.07] & 90.14 [86.09, 93.37]\\
CLIP B/32 & Food101 & 6 & 86.82 [82.32, 90.57] & 89.75 [85.64, 93.05]\\
CLIP B/32 & Food101 & 20 & 89.65 [85.53, 92.97] & 90.04 [85.97, 93.29]\\
CLIP B/32 & Aircraft & 2 & 84.96 [80.26, 88.96] & 89.94 [85.86, 93.21]\\
CLIP B/32 & Aircraft & 6 & 88.48 [84.19, 91.98] & 91.11 [87.22, 94.18]\\
CLIP B/32 & Aircraft & 20 & 88.28 [83.97, 91.82] & 88.57 [84.30, 92.06]\\
CLIP L/14 & CIFAR10 & 2 & 64.26 [58.31, 69.92] & 88.87 [84.64, 92.31]\\
CLIP L/14 & CIFAR10 & 6 & 80.27 [75.13, 84.80] & 89.55 [85.41, 92.89]\\
CLIP L/14 & CIFAR10 & 20 & 87.70 [83.31, 91.32] & 89.26 [85.08, 92.64]\\
CLIP L/14 & Food101 & 2 & 69.73 [63.96, 75.10] & 90.62 [86.65, 93.78]\\
CLIP L/14 & Food101 & 6 & 83.89 [79.07, 88.02] & 91.41 [87.56, 94.42]\\
CLIP L/14 & Food101 & 20 & 88.87 [84.64, 92.31] & 90.92 [86.99, 94.02]\\
CLIP L/14 & Aircraft & 2 & 83.89 [79.07, 88.02] & 89.55 [85.41, 92.89]\\
CLIP L/14 & Aircraft & 6 & 86.72 [82.21, 90.48] & 89.26 [85.08, 92.64]\\
CLIP L/14 & Aircraft & 20 & 88.09 [83.75, 91.65] & 88.96 [84.75, 92.39]\\
SigLIP B/16 & CIFAR10 & 2 & 72.07 [66.41, 77.29] & 91.02 [87.10, 94.10]\\
SigLIP B/16 & CIFAR10 & 6 & 82.52 [77.57, 86.81] & 89.75 [85.64, 93.05]\\
SigLIP B/16 & CIFAR10 & 20 & 87.11 [82.65, 90.82] & 88.77 [84.52, 92.23]\\
SigLIP B/16 & Food101 & 2 & 72.36 [66.72, 77.56] & 90.04 [85.97, 93.29]\\
SigLIP B/16 & Food101 & 6 & 83.59 [78.75, 87.76] & 90.53 [86.54, 93.70]\\
SigLIP B/16 & Food101 & 20 & 90.14 [86.09, 93.37] & 92.09 [88.36, 94.97]\\
SigLIP B/16 & Aircraft & 2 & 85.16 [80.47, 89.13] & 88.77 [84.52, 92.23]\\
SigLIP B/16 & Aircraft & 6 & 88.77 [84.52, 92.23] & 90.23 [86.20, 93.45]\\
SigLIP B/16 & Aircraft & 20 & 88.48 [84.19, 91.98] & 89.06 [84.86, 92.48]\\
\bottomrule\end{tabular}\end{table}
\begin{table}[p]\centering\small
\caption{Every exact-limit condition at nominal 95\% in the post-confirmation sensitivity. Each cell uses $1,024$ independent target episodes. Coverage and simultaneous $95\%$ intervals are percentages; all intervals use the separate $432$-endpoint allocation and round outward. This table does not replace the original frozen one-cycle results.}
\begin{tabular}{llrll}\toprule Encoder & Data & $n/K$ & Empirical limit & Augmented limit\\\midrule
CLIP B/32 & CIFAR10 & 2 & 85.55 [80.91, 89.47] & 94.24 [90.92, 96.66]\\
CLIP B/32 & CIFAR10 & 6 & 92.87 [89.28, 95.59] & 95.21 [92.12, 97.39]\\
CLIP B/32 & CIFAR10 & 20 & 93.46 [89.98, 96.05] & 94.24 [90.92, 96.66]\\
CLIP B/32 & Food101 & 2 & 90.33 [86.31, 93.54] & 95.21 [92.12, 97.39]\\
CLIP B/32 & Food101 & 6 & 92.97 [89.39, 95.67] & 94.92 [91.76, 97.17]\\
CLIP B/32 & Food101 & 20 & 94.63 [91.40, 96.95] & 95.21 [92.12, 97.39]\\
CLIP B/32 & Aircraft & 2 & 92.97 [89.39, 95.67] & 95.31 [92.24, 97.46]\\
CLIP B/32 & Aircraft & 6 & 94.92 [91.76, 97.17] & 95.70 [92.73, 97.75]\\
CLIP B/32 & Aircraft & 20 & 92.87 [89.28, 95.59] & 92.87 [89.28, 95.59]\\
CLIP L/14 & CIFAR10 & 2 & 80.76 [75.66, 85.24] & 94.53 [91.28, 96.88]\\
CLIP L/14 & CIFAR10 & 6 & 89.75 [85.64, 93.05] & 94.73 [91.52, 97.03]\\
CLIP L/14 & CIFAR10 & 20 & 92.97 [89.39, 95.67] & 94.73 [91.52, 97.03]\\
CLIP L/14 & Food101 & 2 & 85.74 [81.12, 89.64] & 95.70 [92.73, 97.75]\\
CLIP L/14 & Food101 & 6 & 91.80 [88.01, 94.73] & 95.70 [92.73, 97.75]\\
CLIP L/14 & Food101 & 20 & 94.43 [91.16, 96.80] & 95.41 [92.36, 97.53]\\
CLIP L/14 & Aircraft & 2 & 91.11 [87.22, 94.18] & 94.82 [91.64, 97.10]\\
CLIP L/14 & Aircraft & 6 & 92.77 [89.16, 95.52] & 94.43 [91.16, 96.80]\\
CLIP L/14 & Aircraft & 20 & 94.53 [91.28, 96.88] & 94.92 [91.76, 97.17]\\
SigLIP B/16 & CIFAR10 & 2 & 83.89 [79.07, 88.02] & 96.48 [93.72, 98.30]\\
SigLIP B/16 & CIFAR10 & 6 & 90.14 [86.09, 93.37] & 94.14 [90.80, 96.58]\\
SigLIP B/16 & CIFAR10 & 20 & 92.29 [88.59, 95.13] & 94.04 [90.68, 96.51]\\
SigLIP B/16 & Food101 & 2 & 87.01 [82.54, 90.73] & 95.21 [92.12, 97.39]\\
SigLIP B/16 & Food101 & 6 & 91.60 [87.78, 94.57] & 95.02 [91.88, 97.25]\\
SigLIP B/16 & Food101 & 20 & 94.63 [91.40, 96.95] & 95.51 [92.48, 97.60]\\
SigLIP B/16 & Aircraft & 2 & 91.50 [87.67, 94.50] & 94.24 [90.92, 96.66]\\
SigLIP B/16 & Aircraft & 6 & 94.92 [91.76, 97.17] & 95.80 [92.85, 97.82]\\
SigLIP B/16 & Aircraft & 20 & 94.63 [91.40, 96.95] & 94.82 [91.64, 97.10]\\
\bottomrule\end{tabular}\end{table}

\paragraph{Complete prediction sets and label budgets.} All 13 methods were frozen before the new GPU extraction. Unadapted LAC, randomized APS, and randomized RAPS are included \citep{romano2020,angelopoulos2021}. RAPS uses the specified penalty $.01$ after rank three. Prior fitting and split ridge use half the total labels for fitting and half for calibration; full methods use the same total number. Temperature labels are additional, shared information. Ridge uses unit regularization, no intercept, and temperature $.1$, with no evaluation-label tuning. Full ridge recomputes calibration and query scores for every candidate, not only the query score.
\begin{table}[t]\centering\small
\caption{Complete prediction sets on the three confirmation datasets, native temperature and nominal $90\%$. Each value equally averages the $27$ cache/sample-size cell means, with $128$ complete episodes per cell. These descriptive mixed-task means are not an efficiency significance test. Catalog fraction is averaged separately. $\dagger$: no general coverage guarantee. All $13$ methods use matched total calibration/fitting label allowances.}
\label{tab:confirmsets}
\begin{tabular}{lrrr}\toprule
Method & Coverage (\%) & Mean labels & Catalog (\%)\\\midrule
Unadapted LAC & 89.88 & 7.037 & 10.06\\
Randomized APS & 89.61 & 7.657 & 11.71\\
Randomized RAPS & 89.63 & 8.049 & 12.05\\
Split ridge & 90.12 & 23.778 & 37.74\\
Full conformal ridge & 90.24 & 18.470 & 29.10\\
Empirical prior, $T=1$ $\dagger$ & 89.01 & 5.342 & 8.56\\
Empirical prior, $T=3$ $\dagger$ & 87.70 & 4.767 & 7.71\\
Uniform prior, $T=1$ & 90.13 & 5.748 & 9.19\\
Uniform prior, $T=3$ & 90.04 & 5.254 & 8.56\\
Independent prior, $T=1$ & 90.24 & 6.244 & 10.53\\
Independent prior, $T=3$ & 90.24 & 6.022 & 10.73\\
Augmented prior, $T=1$ & 90.12 & 5.761 & 9.13\\
Augmented prior, $T=3$ & 90.12 & 5.237 & 8.44\\
\bottomrule\end{tabular}\end{table}

The original screening experiment matches all 6,912 unscreened comparisons and reports a modest median timing ratio of 1.47. This is not a universal speedup or complete-set efficiency measurement at exact convergence. Every complete method cell, query-average episode metric, and paired interval is retained in \texttt{results/summary/}; packing membership masks does not turn queries from one episode into independent trials.

\paragraph{Directional comparisons on known pools.} Both raw and column-normalized bases, all exponents $-1,-.5,0,.5,1$, all nine caches, and all three ratios remain in the later experiment. Its 1,080 confidence endpoints use a separate allocation. The two descriptive tables below show every exponent, including overcoverage and larger sets for negative powers. The method's sufficient guarantee is not an empirical claim that it yields the best sets.
\begin{table}[t]
\centering
\caption{New known-pool directional study, column-normalized base, nominal $90\%$. Each entry equally averages all $27$ cache/budget cell means, each using $128$ full episodes. All exponents are shown; these are descriptive comparisons, not an efficiency significance test. Decreasing weights preserve marginal validity but are not smaller than the fixed uniform baseline here.}
\label{tab:directionresults}
\begin{tabular}{rrrrr}\toprule
 & \multicolumn{2}{c}{Empirical count reuse} & \multicolumn{2}{c}{Full augmentation}\\
$\gamma$ & Coverage (\%) & Mean labels & Coverage (\%) & Mean labels\\\midrule
-1 & 91.13 & 6.349 & 90.26 & 5.853\\
-0.5 & 90.77 & 5.965 & 90.32 & 5.728\\
0 & 90.37 & 5.641 & 90.37 & 5.641\\
0.5 & 89.87 & 5.418 & 90.39 & 5.634\\
1 & 89.28 & 5.245 & 90.40 & 5.662\\
\bottomrule\end{tabular}
\end{table}

\begin{table}[t]
\centering
\caption{New known-pool directional study, raw base, nominal $90\%$. Each entry equally averages all $27$ cache/budget cell means, each using $128$ full episodes. All exponents are shown; these are descriptive comparisons, not an efficiency significance test. Decreasing weights preserve marginal validity but are not smaller than the fixed uniform baseline here.}
\label{tab:directionraw}
\begin{tabular}{rrrrr}\toprule
 & \multicolumn{2}{c}{Empirical count reuse} & \multicolumn{2}{c}{Full augmentation}\\
$\gamma$ & Coverage (\%) & Mean labels & Coverage (\%) & Mean labels\\\midrule
-1 & 91.08 & 7.455 & 90.26 & 7.036\\
-0.5 & 90.66 & 7.176 & 90.26 & 6.975\\
0 & 90.26 & 6.959 & 90.26 & 6.959\\
0.5 & 89.74 & 6.786 & 90.27 & 6.979\\
1 & 89.21 & 6.650 & 90.29 & 7.031\\
\bottomrule\end{tabular}
\end{table}

\paragraph{Retrospective effect-size and target-confidence checks.} The present revision transforms every native original and converged empirical coverage interval into a miss-budget ratio $(1-\operatorname{coverage})/\alpha$. This is a deterministic re-expression of the same interval, not another significance test. It also examines all existing target episodes within four prelisted strata: all targets, correct native predictions, correct predictions with normalized score at least $.9$, and correct predictions with normalized score at least $.99$. Both priors, cycles one and limit, both levels, and all 27 cells yield a separate family of 864 Clopper--Pearson intervals \citep{clopper1934}. No minimum eligible-count filter or favorable-stratum selection is used. A zero eligible count receives $[0,1]$.

Conditioning on correctness or confidence defines a diagnostic population for which ordinary marginal conformal prediction does not promise nominal coverage. These intervals describe conditional subgroups; the theorem guarantees marginal coverage. The selected CIFAR10 illustration below is retrospective; all 864 rows are supplied. In the stated CLIP L/14 $n=20$ limiting run, many of the misses occur even when the original top prediction is correct and confidently ranked. That diagnoses the mechanism but does not imply that every high-confidence subgroup has a deficit. Table~\ref{tab:confidencecheck} shows attenuation of the observed deficit as the confidence threshold increases. In the highest stratum, the 198 correct targets with normalized score at least $.99$ have empirical coverage $182/198\approx91.92\%$, above nominal, with simultaneous interval $[81.49\%,97.60\%]$. This small subgroup and its wide interval establish neither undercoverage nor equivalence to $90\%$; the point estimate is above nominal.
\begin{table}[ht]
\centering\small
\begin{tabular}{lrrr}
\toprule
Target subset & $N$ & Empirical (\%) & Augmented (\%)\\
\midrule
All & 1024 & 64.26 [58.05, 70.15] & 88.87 [84.44, 92.44]\\
Correct & 972 & 67.28 [61.00, 73.17] & 91.77 [87.68, 94.89]\\
Correct, score $\geq.9$ & 775 & 76.52 [69.97, 82.30] & 97.16 [93.96, 98.97]\\
Correct, score $\geq.99$ & 198 & 91.92 [81.49, 97.60] & 100.00 [94.85, 100.00]\\
\bottomrule
\end{tabular}
\caption{Retrospective target strata for CLIP L/14, CIFAR10, $n=20$. Clopper--Pearson intervals \citep{clopper1934} use the separate 864-cell allocation, so the all-target interval differs from the original convergence interval. Conditional coverage is diagnostic, not a promised guarantee; all four strata and every other cell remain in the supplied CSV. Endpoints are rounded outward and verified by exact binomial tails. In the highest confidence stratum, the point estimate is above nominal; its 198 targets and wide interval do not establish either a deficit or nominal equivalence.}
\label{tab:confidencecheck}
\end{table}

\paragraph{Native confidence is not calibrated truth probability.} Confidence above is softmax of the stored native logits over the fixed catalog. For SigLIP it is not a native independent sigmoid score, and for any encoder it is not an assertion of probability calibration. It is kept separate from top-one accuracy and conformal coverage. All targets were already present in the saved episodes; no new GPU inference or environmental replication was performed for this analysis.

\section{Audit receipt and preserved evidence}\label{app:verification}
The computational audit checks the implementation against a separate scalar Fraction reference, including duplicate rows, nonmonotone and heterogeneous functions, and deliberately near-tied probabilities. It also verifies the two symbolic identities used in the main proof and recounts the retrospective target strata from stored logits and labels.
\begin{table}[!htbp]\centering\small
\begin{tabular}{lr}\toprule
Check & Count\\\midrule
Exact candidate membership comparisons & 6,480\\
Guard equals ordinary union augmented & 2,160\\
Guarded whole-bag guarantee checks & 108\\
Exact converse roles & 2,100\\
Exact quota-constant substitutions & 60\\
Explicit quantifier boundary fixtures & 3\\
Saved target strata recounted & 864\\
Exact displayed binomial-tail inequalities & 1,343\\
Original effect rows independently joined & 108\\
\bottomrule\end{tabular}
\caption{Internal computational checks. All declared fixtures passed.}
\end{table}
The worked example adds six exact probability values and two exact ranks. The guard used 676 exact-fallback comparisons in its test schedule; no tolerance changes the final membership. Four malformed input cases were rejected as specified. All 27,648 reused target identities were checked. The smallest distance of a cached native confidence from either subgroup threshold was 4.6e-07, avoiding a threshold ambiguity in this analysis.

A complete second execution reproduced the scientific contents of 15 files, including 513 saved arrays containing 94,120 entries and 1,848 CSV rows. Only execution time and the recorded UTC timestamp were excluded; input and source bindings remained checked. The original scientific result, protocol, and runner files were separately compared byte for byte against the supplied source archive.

The supplementary checks use finite and symbolic calculations to test the implementations against the stated formulas. The universal arguments are proved in the main text and supporting sections.

The artifact preserves the prior 60 page manuscript in \texttt{history/before\_reviewer\_revision/Sharp\_Boundary\_Record.pdf}, the original GPU schedule and outcomes, subsequent study protocols, and code. The archived study protocols and empirical data remain unchanged, with frozen studies and later analyses identified separately. Earlier floating tie diagnostics and the complete corrected rerun remain preserved. The historical release-order reconstruction gave $1.9066\%$ median size reduction and $.00361704$ mean labels saved, versus the earlier report $1.93\%$ and $.00379$; that small unresolved provenance difference is separate from all new full-catalog results.

\section{Efficiency witnesses and expanded selection study}\label{app:utility}
Theorem~\ref{thm:efficiency} has its complete proof in the main text. The program \texttt{utility/exact\_efficiency.py} enumerates all $2^9$ ordered calibration label sequences and both target labels for each law using rational arithmetic. It also supplies all ten count cases and exactly sums their binomial probabilities. For the helpful law at $p=4/5$, the common coverage is $9503481/9765625$, and the two expected sizes are $11194462/9765625$ and $10014814/9765625$. For the harmful law they are respectively $9765621/9765625$ (both coverage and uniform size) and $15532789/9765625$ (count-rule size). These are counterexamples to universal efficiency orderings, not estimated natural performance or claims about an oracle over all class-specific constant weights.

\paragraph{Candidate specification.} Every class weight is fixed conditional on the fitting pool. The 55 live rules are: uniform; $(c+a)^\gamma$ for $a\in\{1,4\}$ and $\gamma\in\{-.25,-.5,-1,-2\}$; $\max(c+1,m)^\gamma$ for $m\in\{2,4,8,16,32\}$ and $\gamma\in\{-.5,-1\}$; a weight of one at $c\leq t$ and $q$ otherwise for $t\in\{0,1,2,5\}$ and $q\in\{.25,.5,.75\}$; and $1-\ell+\ell a/(c+a)$ for $a\in\{1,4,16\}$, $\ell\in\{.25,.5,.75\}$. The fifteen heterogeneous rules comprise six class-specific power vectors (two negative exponents, targeting classes above median fitting confidence, above median fitted frequency, or continuously scaled by fitting confidence), six class-specific steps at $c\leq b n\widehat\pi_h$ for $b\in\{.5,1,2\}$ and $q\in\{.5,.8\}$, and three negative powers multiplied by reciprocal independently estimated class confidence. Shrinkage in the fitting confidence and class frequency estimates keeps all values positive. All formulas and the deterministic enumeration order are in \texttt{utility/sweep.py}.

\paragraph{Selection and information.} The fixed 256 fitting IDs are the original temperature-fit exclusions, selected by the unchanged hash rule. Per class, $\widehat\pi_h=(N_h+1)/(256+K)$, and confidence is a one-observation shrinkage estimate of the native true-label score toward the overall fitting mean. All candidates, including their thresholds and class-specific exponents, depend only on this fitting pool. The 32 selection episodes resample it; they are not fresh training observations or independent draws from the evaluation pool. A selected function is fixed independently of final calibration labels, as required by Theorem~\ref{thm:sharpcommon}. For a random fitting sample under an iid superpopulation interpretation, condition on that sample before applying the guarantee. Here the experiment is explicitly conditional on fixed disjoint empirical fitting and evaluation pools.

\paragraph{All outcomes and intervals.} Each of 54 cache/budget/base combinations has 512 evaluation episodes and both levels. All 55 live rules and their 55 matching independently frozen rules are evaluated. Files record 11,880 rule-level coverage/size summaries, 648 independently selected global/family winners, and 1,296 paired comparisons against uniform or matching frozen rules. Queries from one episode share calibration, so each query-average episode is the statistical unit. For differences in size divided by $K$, the empirical Bernstein bound uses range $[-1,1]$ and $\delta=.05/1296$ across every selected comparison. No comparison excludes zero in this family. The intervals are intentionally conservative; lack of significance proves neither equivalence nor optimality. The selected raw mean differences versus uniform at 90/95\% are $-.3885/-.5447$ labels; the corresponding differences versus the same rule with independently frozen weights are $+.1203/+.1978$. These descriptive differences prevent attributing all adjustment gains to live count information.

\paragraph{Data provenance.} This retrospective analysis uses the previously studied caches. All indices, candidate formulas, selection choices saved before evaluation, per-episode metrics, per-cell summaries, source/input hashes, and execution records are supplied. The main sweep took 173.30 seconds on four CPU workers; it performed no GPU extraction. The original studies remain unchanged. The separately executed native benchmark is reported in Appendix~\ref{app:native}.

\section{General score criterion: reference choices and precise scope}\label{app:general}
The main text contains the complete transfer proof and both common-map converses. This appendix details the reference choices, implementation, and scope.

\paragraph{All-count versus leave-self-out augmentation.}
For hypothetical query label $h$, the all-count reference scores every row with $D=c+e_h$. In the leave-self-out reference, row $i$ instead uses $D-e_{Y_i}$ and the query uses $D-e_h=c$. Both are ordinary full conformal constructions: at the true hypothesis each row's score is a permutation-equivariant function of the labeled bag, treating every role by the same rule. Only the second reference makes the ordinary query score exactly equal to its reference score. This is why pointwise monotonicity in the count is sufficient even if ties appear or score order changes within a class.

The transfer condition needs only counts of total $n$ when proving validity at that $n$. The sufficiency statement permits known sample-size constants and functions selected using independent fitting information. The common separable iff fixes one map on the whole mark/count domain across sample sizes. It does not infer a fixed-$n$ necessity result from a downward step at an unreachable count.

\paragraph{A monotone map separating the two references.}
Let $K=2$, $n=1$, $\alpha=1/2$, and let both the calibration and target labels be 1. Their scalar base scores are 0 and 1. On $s\in[0,1]$, define
\[
\phi(s,c)=\begin{cases}s,&c\leq1,\\1,&c\geq2.\end{cases}
\]
This map is nondecreasing in both arguments. Ordinary calibration uses count 1 and excludes the target because $0<1$. All-count augmentation uses count 2 and includes the tied target. Hence $C_A\nsubseteq C_E$ on this sample. The leave-self-out reference uses count 1 for both rows and equals the ordinary rule; its marginal guarantee is unaffected.

\paragraph{Feature-dependent additive penalties.}
For \eqref{eq:additive}, the calibration-reference difference for a row with $Y_i=j\ne h$ is exactly
\[
S_i(j;c)-\Psi_j(V_i(j),c+e_h-e_j)
=\frac{\lambda w_i(j)}{n+Ka}\geq0.
\]
For $j=h$ it is zero. A fixed rank penalty or another nonnegative function of a label-free base can therefore be used, even when different rows have different penalties. If auxiliary randomness is used, it must be independent and treated as exchangeable row marks, or the score map must separately meet the equivariance assumption. A hyperparameter optimized on final calibration outcomes is not made valid by this calculation. Applying the corollary to a score of the soft TACP form does not assert that every published variant uses this information protocol.

\paragraph{Why the scalar target alone cannot subsume normalization.}
Fix the candidate's own base value and count in \eqref{eq:generalweight}, then vary a different class count with a nonconstant weight. Its denominator, and generally its candidate probability, change. A scalar map of only the original candidate score and its own count cannot represent this dependency in general. The vector-count transfer criterion handles this coupling. The criterion is sufficient for coupled scores; necessity is established for the common-map specializations.

\paragraph{Implementation and finite checks.}
For separable scores, the routine uses ordinary calibration true scores and their values when their own count is reduced by one. For candidate $h$, only calibration rows whose label is not $h$ use the reduced score; the query retains its ordinary score. One order statistic per candidate produces the exact mathematical reference. The floating API retains inclusive represented ties and rejects malformed or nonfinite inputs; it does not certify numerical errors in an arbitrary user-supplied score evaluator. The ordinary/reference union is available when monotonicity is not assumed.

The recorded fixtures contain 1,944 whole-bag guarantee checks, 19,008 true-role comparisons, 54,720 exact candidate containment comparisons, 210 converse roles, and 1,080 comparisons of the floating API with an exact scalar reference. They cover additive, row-weighted, clipped, stepwise and base-order-reversing score maps, as well as normalized probabilities. All declared checks pass. These finite computations test signs and tie handling; the main-text arguments prove the universal claims. The original natural data, outcomes, and confidence allocations are unchanged.

\clearpage
\section{Completed native Conf-OT benchmark: scope and full results}\label{app:native}
\paragraph{Original code and inputs.}
The author executed the released \texttt{jusiro/CLIP-Conformal} extractor and scientific driver at commit \texttt{1a29cedd65a780696e40e2e099dd23d116d9ca39}. Of the 51 tracked source files in the before/after receipts, only the documented dataset-root constant changes. The returned launcher is byte-identical to the distributed version. The run covers the original default two-dataset subset and unadapted comparison. The execution records specify dataset configurations, dependency versions, and output hashes.

DTD \citep{cimpoi2014} uses the repository's Zhou test list: 1,692 images from 47 classes, each with 36 images. This is not the standard 40-test-images-per-class DTD split. Aircraft \citep{maji2013} uses its 100-variant test list: 67 classes with 33 images and 33 with 34, totaling 3,333. Original prompts are \texttt{\{\} texture} and \texttt{a photo of a \{\}, a type of aircraft.} The original CLIP B/16 preprocessing is retained, without the earlier custom study's attribution-strip removal. Both logit arrays match the manifest label order and cache hashes.

\paragraph{Settings and completion.}
The original driver uses \texttt{epsilon=1.0}, \texttt{ot\_iters=3}, \texttt{observed\_marginal=true}, \texttt{p=0.5}, and \texttt{seeds=20}. Conf-OT uses calibration counts without pseudocounts. Original LAC, randomized APS, and randomized RAPS are retained; RAPS has upstream defaults $\lambda=.001$ and $k_{\rm reg}=1$, not the custom study's different values. All 12 commands complete, supplying 24 cells with 20 finite split records each and no missing benchmark outcomes. Interpretations for either result were specified before execution. Native TIM, TransCLIP, FCA and SCA-T are not reproduced.

\paragraph{Fixed quotas and their limits.}
For a class with $N_h$ images, the released splitter fixes $c_h=\max\{1,\lfloor N_h/2\rfloor\}$ before shuffling within-class indices. DTD has 846 calibration and 846 test images, exactly 18 of each per class; its empirical prior is identically uniform. Aircraft has 1,633 calibration and 1,700 test images: calibration counts are 16 or 17, and every class has 17 test images. Its calibration/test class-mixture total variation distance is .01354. Fixed counts remove this particular random self-count dependence, not every possible obstacle to pooled-rank validity under stratification.

The iid analysis permits a calibration observation's own label to change its scoring count relative to a target. Fixed quotas prevent that perturbation in this benchmark. When counts vary across target roles, or fitted choices use calibration outcomes beyond the quotas, this invariance argument no longer applies; failure does not follow automatically. Model, sample size, feature pool, pseudocount and preprocessing also differ from the custom study. The comparison therefore does not isolate stratification as the sole causal explanation.

\paragraph{Reporting.}
The following tables retain every cell from the released detailed arrays. Medians, means and observed ranges over 20 splits are descriptive, not confidence intervals. Shared images and advancing sampling/random-score streams do not create independent new populations. No pooled binomial test is applied, a below-nominal median is not called a proved coverage violation, and size reductions are not a comparison at matched true coverage. Every original outcome remains available.
\clearpage
\begin{table}[!ht]
\centering\small
\caption{All native benchmark cells at nominal 90\%. Coverage entries are percentages; the range is the minimum and maximum observed split value, not a confidence interval. Size is the median split mean number of labels. Each row has 20 repetitions.}\label{tab:native-90}
\begin{tabular}{lllrrrr}
\toprule
Data & Score & Method & Median & Mean & Range & Size\\
\midrule
DTD & LAC & Unadapted & 89.95 & 89.96 & 88.06--91.37 & 10.866\\
DTD & LAC & Conf-OT & 89.48 & 89.83 & 88.42--91.61 & 8.632\\
DTD & APS & Unadapted & 90.25 & 90.15 & 88.53--92.43 & 12.523\\
DTD & APS & Conf-OT & 89.89 & 89.89 & 88.18--91.84 & 10.026\\
DTD & RAPS & Unadapted & 90.25 & 90.11 & 88.30--92.20 & 12.243\\
DTD & RAPS & Conf-OT & 89.95 & 89.93 & 88.18--91.84 & 9.869\\
Aircraft & LAC & Unadapted & 89.76 & 89.82 & 88.41--91.18 & 18.160\\
Aircraft & LAC & Conf-OT & 89.97 & 89.99 & 88.53--91.12 & 14.038\\
Aircraft & APS & Unadapted & 89.91 & 89.84 & 88.82--91.47 & 17.992\\
Aircraft & APS & Conf-OT & 89.85 & 90.00 & 88.41--91.94 & 14.823\\
Aircraft & RAPS & Unadapted & 89.94 & 89.87 & 88.24--91.65 & 17.986\\
Aircraft & RAPS & Conf-OT & 89.88 & 89.98 & 88.47--91.53 & 14.540\\
\bottomrule
\end{tabular}
\end{table}
\begin{table}[!ht]
\centering\small
\caption{All native benchmark cells at nominal 95\%. Coverage entries are percentages; the range is the minimum and maximum observed split value, not a confidence interval. Size is the median split mean number of labels. Each row has 20 repetitions.}\label{tab:native-95}
\begin{tabular}{lllrrrr}
\toprule
Data & Score & Method & Median & Mean & Range & Size\\
\midrule
DTD & LAC & Unadapted & 95.27 & 95.20 & 92.20--96.57 & 17.291\\
DTD & LAC & Conf-OT & 94.33 & 94.65 & 93.38--96.81 & 12.471\\
DTD & APS & Unadapted & 95.45 & 95.18 & 91.73--96.22 & 18.823\\
DTD & APS & Conf-OT & 94.62 & 94.78 & 93.03--96.22 & 13.752\\
DTD & RAPS & Unadapted & 95.27 & 95.19 & 91.96--96.45 & 17.599\\
DTD & RAPS & Conf-OT & 94.62 & 94.65 & 93.14--96.34 & 13.522\\
Aircraft & LAC & Unadapted & 95.12 & 94.98 & 94.00--95.82 & 28.593\\
Aircraft & LAC & Conf-OT & 94.82 & 94.91 & 93.65--96.35 & 20.507\\
Aircraft & APS & Unadapted & 94.97 & 94.94 & 93.94--95.82 & 29.018\\
Aircraft & APS & Conf-OT & 95.29 & 95.15 & 93.71--96.65 & 21.615\\
Aircraft & RAPS & Unadapted & 95.00 & 94.94 & 93.41--95.88 & 30.245\\
Aircraft & RAPS & Conf-OT & 95.12 & 95.05 & 93.82--96.76 & 21.311\\
\bottomrule
\end{tabular}
\end{table}

\paragraph{Resources and post-return audit.}
The author used one NVIDIA H100 80GB HBM3. Recorded elapsed time is 1,495.35 seconds, including wrapper setup, assets, extraction and benchmark jobs, not GPU busy time. Versions are Python 3.10.12, PyTorch 2.0.1/CUDA 11.8, torchvision 0.15.2 and TorchCP 0.1.3; the full dependency freeze and pinned CLIP revision are supplied. The separate audit reconstructs all 480 split records and 24 summaries from detailed arrays, checks exported CSVs and rounded spreadsheet rows, and verifies input/output/protocol bindings. This audit operates on the saved arrays; raw-image inference is rerun through the original notebook. \texttt{python reproduce.py --native-audit} rebuilds these summaries without a GPU or network. The returned archive includes original outputs, both caches, manifests, all commands, logs and receipts; the original notebook remains available for rerunning the subset.
\clearpage


\begin{thebibliography}{99}
\bibitem[Angelopoulos et~al.(2021)]{angelopoulos2021}
Anastasios N. Angelopoulos, Stephen Bates, Jitendra Malik, and Michael I. Jordan.
Uncertainty Sets for Image Classifiers using Conformal Prediction.
In \emph{International Conference on Learning Representations}, 2021.
\url{https://arxiv.org/abs/2009.14193}.

\bibitem[Bossard et~al.(2014)]{bossard2014}
Lukas Bossard, Matthieu Guillaumin, and Luc Van Gool. Food-101: Mining Discriminative Components with Random Forests. In \emph{European Conference on Computer Vision}, 2014. \url{https://data.vision.ee.ethz.ch/cvl/datasets_extra/food-101/}.

\bibitem[Cimpoi et~al.(2014)]{cimpoi2014}
Mircea Cimpoi, Subhransu Maji, Iasonas Kokkinos, Sammy Mohamed, and Andrea Vedaldi.
Describing Textures in the Wild.
In \emph{IEEE/CVF Conference on Computer Vision and Pattern Recognition}, 2014.
\url{https://www.robots.ox.ac.uk/~vgg/data/dtd/}.

\bibitem[Clopper and Pearson(1934)]{clopper1934}
C.~J. Clopper and E.~S. Pearson.
The use of confidence or fiducial limits illustrated in the case of the binomial.
\emph{Biometrika}, 26(4):404--413, 1934. \url{https://doi.org/10.1093/biomet/26.4.404}.

\bibitem[Ding et~al.(2023)]{ding2023}
Tiffany Ding, Anastasios N. Angelopoulos, Stephen Bates, Michael I. Jordan, and Ryan J. Tibshirani.
Class-Conditional Conformal Prediction with Many Classes.
In \emph{Advances in Neural Information Processing Systems}, volume 36, 2023.
\url{https://proceedings.neurips.cc/paper_files/paper/2023/hash/cb931eddd563f8d473c355518ce8601c-Abstract-Conference.html}.

\bibitem[Gazin et~al.(2024)]{gazin2024}
Ulysse Gazin, Gilles Blanchard, and Etienne Roquain.
Transductive conformal inference with adaptive scores.
In \emph{Proceedings of AISTATS}, volume 238, pp. 1504--1512, 2024.
\url{https://proceedings.mlr.press/v238/gazin24a.html}.

\bibitem[Helber et~al.(2019)]{helber2019}
Patrick Helber, Benjamin Bischke, Andreas Dengel, and Damian Borth.
EuroSAT: A novel dataset and deep learning benchmark for land use and land cover classification.
\emph{IEEE Journal of Selected Topics in Applied Earth Observations and Remote Sensing}, 2019.
\url{https://arxiv.org/abs/1709.00029}.

\bibitem[Knight(2008)]{knight2008}
Philip A. Knight.
The Sinkhorn--Knopp Algorithm: Convergence and Applications.
\emph{SIAM Journal on Matrix Analysis and Applications}, 30(1):261--275, 2008.
\url{https://doi.org/10.1137/060659624}.

\bibitem[Krizhevsky(2009)]{krizhevsky2009}
Alex Krizhevsky. Learning Multiple Layers of Features from Tiny Images. Technical report, University of Toronto, 2009. \url{https://www.cs.toronto.edu/~kriz/cifar.html}.

\bibitem[Liang et~al.(2026)]{liang2026}
Ruiting Liang, Wanrong Zhu, and Rina Foygel Barber.
Conformal prediction after data-dependent model selection.
\emph{arXiv:2408.07066}, version 4, 2026. \url{https://arxiv.org/abs/2408.07066v4}.

\bibitem[Liu et~al.(2026)]{liu2026tail}
Shuqi Liu, Jianguo Huang, and Luke Ong.
Conformal Prediction Meets Long-tail Classification.
In \emph{Proceedings of the AAAI Conference on Artificial Intelligence}, 40(28):23828--23836, 2026.
\url{https://doi.org/10.1609/aaai.v40i28.39558}.

\bibitem[Maji et~al.(2013)]{maji2013}
Subhransu Maji, Juho Kannala, Esa Rahtu, Matthew Blaschko, and Andrea Vedaldi. Fine-Grained Visual Classification of Aircraft. \emph{arXiv:1306.5151}, 2013. \url{https://arxiv.org/abs/1306.5151}.

\bibitem[Maurer and Pontil(2009)]{maurer2009}
Andreas Maurer and Massimiliano Pontil. Empirical Bernstein bounds and sample variance penalization. In \emph{Conference on Learning Theory}, 2009. \url{https://arxiv.org/abs/0907.3740}.

\bibitem[Menon et~al.(2021)]{menon2021}
Aditya Krishna Menon, Sadeep Jayasumana, Ankit Singh Rawat, Himanshu Jain, Andreas Veit, and Sanjiv Kumar.
Long-Tail Learning via Logit Adjustment.
In \emph{International Conference on Learning Representations}, 2021.
\url{https://arxiv.org/abs/2007.07314}.

\bibitem[Ndiaye(2022)]{ndiaye2022}
Eugene Ndiaye. Stable conformal prediction sets. In \emph{International Conference on Machine Learning}, volume 162, pp. 16462--16479, 2022. \url{https://proceedings.mlr.press/v162/ndiaye22a.html}.

\bibitem[Parkhi et~al.(2012)]{parkhi2012}
Omkar M. Parkhi, Andrea Vedaldi, Andrew Zisserman, and C.~V. Jawahar.
Cats and dogs. In \emph{IEEE Conference on Computer Vision and Pattern Recognition}, 2012.
\url{https://www.robots.ox.ac.uk/~vgg/data/pets/}.

\bibitem[Radford et~al.(2021)]{radford2021}
Alec Radford et~al.
Learning transferable visual models from natural language supervision.
In \emph{International Conference on Machine Learning}, volume 139, pp. 8748--8763, 2021.
\url{https://proceedings.mlr.press/v139/radford21a.html}.

\bibitem[Romano et~al.(2020)]{romano2020}
Yaniv Romano, Matteo Sesia, and Emmanuel J. Cand\`es.
Classification with Valid and Adaptive Coverage.
In \emph{Advances in Neural Information Processing Systems}, volume 33, 2020.
\url{https://arxiv.org/abs/2006.02544}.

\bibitem[Shafer and Vovk(2008)]{shafer2008}
Glenn Shafer and Vladimir Vovk.
A tutorial on conformal prediction.
\emph{Journal of Machine Learning Research}, 9(12):371--421, 2008.
\url{https://jmlr.org/papers/v9/shafer08a.html}.

\bibitem[Silva-Rodr\'{i}guez et~al.(2025a)]{silva2025}
Julio Silva-Rodr\'{i}guez, Ismail Ben Ayed, and Jose Dolz.
Conformal Prediction for Zero-Shot Models.
In \emph{IEEE/CVF Conference on Computer Vision and Pattern Recognition}, 2025.
\url{https://arxiv.org/abs/2505.24693}.

\bibitem[Vovk(2012)]{vovk2012}
Vladimir Vovk.
Conditional validity of inductive conformal predictors.
In \emph{Asian Conference on Machine Learning}, volume 25, pp. 475--490, 2012.
\url{https://proceedings.mlr.press/v25/vovk12.html}.

\bibitem[Xie et~al.(2025)]{xie2025}
Tianmin Xie, Yanfei Zhou, Ziyi Liang, Stefano Favaro, and Matteo Sesia.
Conformal Inference for Open-Set and Imbalanced Classification.
\emph{arXiv:2510.13037}, version 1, 2025.
\url{https://arxiv.org/abs/2510.13037v1}.

\bibitem[Zeng et~al.(2025)]{zeng2025}
Hao Zeng, Kangdao Liu, Bingyi Jing, and Hongxin Wei.
Parametric Scaling Law of Tuning Bias in Conformal Prediction.
In \emph{International Conference on Machine Learning}, volume 267, pp. 74133--74156, 2025.
\url{https://proceedings.mlr.press/v267/zeng25e.html}.

\bibitem[Zhai et~al.(2023)]{zhai2023}
Xiaohua Zhai, Basil Mustafa, Alexander Kolesnikov, and Lucas Beyer.
Sigmoid Loss for Language Image Pre-Training.
In \emph{IEEE/CVF International Conference on Computer Vision}, 2023.
\url{https://arxiv.org/abs/2303.15343}.

\bibitem[Silva-Rodr\'{i}guez et~al.(2025b)]{fca2025}
Julio Silva-Rodr\'{i}guez, Leo Fillioux, Paul-Henry Courn\`ede, Maria Vakalopoulou, Stergios Christodoulidis, Ismail Ben Ayed, and Jose Dolz.
Full Conformal Adaptation of Medical Vision-Language Models.
In \emph{Information Processing in Medical Imaging}, 2025.
\url{https://arxiv.org/abs/2506.06076}.

\bibitem[Silva-Rodr\'{i}guez et~al.(2025c)]{scat2025}
Julio Silva-Rodr\'{i}guez, Ismail Ben Ayed, and Jose Dolz.
Trustworthy Few-Shot Transfer of Medical VLMs through Split Conformal Prediction.
In \emph{Medical Image Computing and Computer Assisted Intervention}, 2025.
\url{https://arxiv.org/abs/2506.17503}.

\end{thebibliography}
\end{document}